\newtheorem{thm}{Theorem}[section]
\newtheorem{theorem}[thm]{Theorem}
\newtheorem{lemma}[thm]{Lemma}
\theoremstyle{definition}
\begin{document}
\date{}
\title{Tighter constraints of multiqubit entanglement for negativity}

\author{Long-Mei Yang$^1$, Bin Chen$^2$,
Shao-Ming Fei $^{3,4}$, Zhi-Xi Wang$^3$\thanks{Corresponding author: wangzhx@mail.cnu.edu.cn}
\\
{\footnotesize $^1$State Key Laboratory of Low-Dimensional Quantum Physics and Department of Physics, Tsinghua University, Beijing 100084, China}\\
{\footnotesize $^2$School of Mathematical Sciences, Tianjin Normal University, Tianjin 300387, China}\\
{\footnotesize $^3$School of Mathematical Sciences, Capital Normal University, Beijing 100048, China}\\
{\footnotesize $^4$Max-Planck-Institute for Mathematics in the Sciences, 04103 Leipzig, Germany}}

\maketitle

\begin{abstract}
We provide a characterization of multiqubit entanglement monogamy and polygamy constraints in terms of negativity.
Using the square of convex-roof extended negativity (SCREN) and the Hamming weight of the binary vector related to the distribution of subsystems
proposed in Kim (Phys Rev A 97: 012334, 2018), we provide a new class of monogamy inequalities of multiqubit entanglement based on the $\alpha$th power of SCREN for $\alpha\geq1$
and polygamy inequalities for $0\leq\alpha\leq1$ in terms of squared convex-roof extended negativity of assistance (SCRENoA).
For the case $\alpha<0$, we give the corresponding polygamy and monogamy relations for SCREN and SCRENoA, respectively.
We also show that these new inequalities give rise to tighter constraints than the existing ones.
\end{abstract}

\section{Introduction}

Quantum entanglement \cite{F.M,K.Chen,HPB1,HPB2,Vicente,CJZ} is one of the most intrinsic features of quantum mechanics,
which distinguishes the quantum from the classical world.
A distinct property of quantum entanglement is that a quantum system entangled with one of the other systems limits
its shareability with the remaining ones, known as the monogamy of entanglement (MoE) \cite{BMT,JSK}.
Being a useful resource, MoE plays a significant role in  many quantum information and communication processing tasks such as the security proof
in quantum cryptographic scheme \cite{CHB}.

For a given tripartite quantum state $\rho_{ABC}$, MoE can be characterized in a quantitative way known as monogamy inequality,
\begin{equation}
E(\rho_{ABC})\geq E(\rho_{AB})+E(\rho_{AC}),
\end{equation}
where $\rho_{AB}={\rm tr}_C(\rho_{ABC})$ and $\rho_{AC}={\rm tr}_B(\rho_{ABC})$ are the reduced density matrices.
In Ref. \cite{VC}, Coffman-Kundu-Wootters (CKW) established the first monogamy inequality based on the bipartite entanglement measure defined by tangle.
Later, Osborne {\it et al.} generalize the three-qubit CKW inequality to arbitrary multiqubit systems \cite{TJO}.
Monogamy inequalities in higher-dimensional quantum systems also have been deeply investigated by the use of various bipartite entanglement measures \cite{JSK1,JSK2,JSK3,JSK4}.

The assisted entanglement is a dual amount to bipartite entanglement measures, which accordingly has a dually monogamous property in multipartite quantum systems.
This dually monogamous property gives rise to a dual monogamy inequality known as polygamy inequality \cite{G.G1,G.G2}.
For a tripartite state $\rho_{ABC}$, one has
\begin{equation}\label{tauABC}
\tau^a(\rho_{A|BC})\leq\tau^a (\rho_{AB})+\tau^a(\rho_{AC}),
\end{equation}
where $\tau^a(\rho_{A|BC})$ is the tangle of assistance.

In Ref. \cite{JSK3, F.B}, the authors generalized the inequality \eqref{tauABC} to the cases of multiqubit quantum systems
and some class of higher-dimensional quantum systems.
By using the entanglement of assistance, a general polygamy inequality of multipartite entanglement in arbitrary-dimensional quantum systems has been also established \cite{JSK5,JSK6}.

Recently, based on the $\alpha$th power of entanglement measures, many generalized classes of monogamy inequalities were proposed \cite{Oliveira,Luo,SM.Fei1,SM.Fei2,SM.Fei3}.
In Ref. \cite{JSK7}, Kim investigated multiqubit entanglement constraints related to the negativity.
By using the $\alpha$th power of squared convex-roof extended negativity (SCREN) and the squared convex-roof extended negativity of assistance (SCRENoA)
for $\alpha\geq1$ and $0\leq\alpha\leq 1$, respectively, both monogamy and polygamy inequalities were established.
These inequalities involve the notion of Hamming weight of the binary vector related to the distribution of subsystems
and are shown to be tighter than the previous ones.

In this paper, we show that both the monogamy inequalities with $\alpha\geq1$ and polygamy inequalities with $0\leq\alpha\leq 1$
given in Ref. \cite{JSK7} can be further improved to be tighter.
Even for the case of $\alpha<0$, we can also provide tight constraints in terms of SCREN and SCRENoA.
Thus, a complete characterization for the full range of the power $\alpha$ is given.
These tighter constraints of multiqubit entanglement give rise to finer characterizations of the entanglement distributions
among the multiqubit systems.

\section{Preliminaries}

We first consider the monogamy inequalities and polygamy inequalities related to the negativity.
The tangle of a bipartite pure states $|\psi\rangle_{AB}$ is defined as \cite{VC}
\begin{equation}
\tau(|\psi\rangle_{A|B})=2(1-{\rm tr}\rho_A^2)
\end{equation}
where $\rho_A={\rm tr}_B|\psi\rangle_{AB}\langle\psi|$.
The tangle of a bipartite mixed state $\rho_{AB}$ is defined as
\begin{equation}\label{tauAB}
\tau(\rho_{A|B})=\Bigg[\min\limits_{\{p_k,|\psi_k\rangle\}}\sum\limits_{k}p_k\sqrt{\tau(|\psi_k\rangle_{A|B})}\Bigg]^2,
\end{equation}
and the tangle of assistance (ToA) of $\rho_{AB}$ is defined as
\begin{equation}\label{taua}
\tau^a(\rho_{A|B})=\Bigg[\max\limits_{\{p_k,|\psi_k\rangle\}}\sum\limits_{k}p_k\sqrt{\tau(|\psi_k\rangle_{A|B})}\Bigg]^2,
\end{equation}
where the minimization in \eqref{tauAB} and the maximum in \eqref{taua} are taken over all possible pure state decompositions of $\rho_{AB}=\sum\nolimits_{k}p_k|\psi_k\rangle_{AB}\langle\psi_k|$.

For any bipartite quantum state $\rho_{AB}$, the negativity is defined as \cite{JSK7,G.V},
 $\mathcal{N}(\rho_{A|B})=\|\rho_{AB}^{T_B}\|_1-1$,
where $\rho_{AB}^{T_B}$ is the partial transposition of $\rho_{AB}$, and $\|\cdot\|_1$ is the trace norm.
Then the notion of tangle and ToA for two-qubit state $\rho_{AB}$ in \eqref{tauAB} and \eqref{taua} can be rewritten as \cite{JSK7}
\begin{equation}
\tau(\rho_{A|B})=\Bigg[\min\limits_{\{p_k,|\psi_k\rangle\}}\sum\limits_{k}p_k\mathcal{N}(|\psi_k\rangle_{A|B})\Bigg]^2
\end{equation}
and
\begin{equation}
\tau^a(\rho_{A|B})=\Bigg[\max\limits_{\{p_k,|\psi_k\rangle\}}\sum\limits_{k}p_k\mathcal{N}(|\psi_k\rangle_{A|B})\Bigg]^2,
\end{equation}
respectively, due to the fact that $\mathcal{N}^2(|\psi\rangle_{A|B})=4\lambda_1\lambda_2=\tau(|\psi\rangle_{A|B})$
for any bipartite pure state $|\psi\rangle_{AB}$ with Schmidt rank 2,
$|\psi\rangle_{AB}=\sqrt{\lambda_1}|e_0\rangle_A\otimes|f_0\rangle_B+\sqrt{\lambda_2}|e_1\rangle_A\otimes|f_1\rangle_B$.

For higher-dimensional quantum systems, a rather natural generalization of two-qubit tangle is proposed, known as SCREN,
\begin{equation}\label{sc1}
\mathcal{N}_{sc}(\rho_{A|B})=\Bigg[\min\limits_{\{p_k,|\psi_k\rangle\}}\sum\limits_{k}p_k\mathcal{N}(|\psi_k\rangle_{A|B})\Bigg]^2.
\end{equation}
The dual quantity to SCREN can also be defined as
\begin{equation}\label{dualsc1}
\mathcal{N}_{sc}^a(\rho_{A|B})=\Bigg[\max\limits_{\{p_k,|\psi_k\rangle\}}\sum\limits_{k}p_k\mathcal{N}(|\psi_k\rangle_{A|B})\Bigg]^2,
\end{equation}
which is called the SCREN of assistance (SCRENoA).
Then, the tangle-based multiqubit monogamy and polygamy inequalities become as
\begin{equation}\label{ine1}
\mathcal{N}_{sc}(|\psi\rangle_{A_1|A_2\cdots A_n})\geq\sum\limits_{j=2}^n\mathcal{N}_{sc}(\rho_{A_1|A_j}),
\end{equation}
and
\begin{equation}\label{ine2}
\mathcal{N}_{sc}^a(|\psi\rangle_{A_1|A_2\ldots A_n})\leq\sum\limits_{j=2}^n\mathcal{N}_{sc}^a(\rho_{A_1|A_j}),
\end{equation}
where $\rho_{A_1|A_j}$ is two-qubit reduced density matrices $\rho_{A_1A_j}$ of subsystems $A_1A_j$ for $j=2,3,\ldots,n$ \cite{JSK7}.

Recently, Kim provided a class of monogamy and polygamy inequalities of multiqubit entanglement by the use of powered SCREN and the Hamming
weight of the binary vector related to the distribution of subsystems \cite{JSK7}.
For any nonnegative integer $j$ and its binary expansion $j=\sum\nolimits_{i=0}^{n-1}j_i2^i$,
where $\log_2^j<n$ and $j_i\in\{0,1\}$ for $i=0,1,\ldots,n-1$,
one can define a binary vector $\vec{j}$ as $\vec{j}=\{j_0,j_1,\ldots,j_{n-1}\}$.
The number of $1$'s in its coordinates is denoted as $\omega_H(\vec{j})$, called the Hamming weight of $\vec{j}$ \cite{MAN}.
Based on these notions, Kim proposed tight constraints of multiqubit entanglement as follows \cite{JSK7}:
\begin{equation}\label{Sc1}
[\mathcal{N}_{sc}(|\psi\rangle_{A|B_0B_1\ldots B_{N-1}})]^\alpha\geq \sum\limits_{j=0}^{N-1}\alpha^{\omega_{H}(\vec{j})}[\mathcal{N}_{sc}(\rho_{A|B_j})]^\alpha,
\end{equation}
for $\alpha\geq1$, and
\begin{equation}\label{ine3}
[\mathcal{N}_{sc}^a(|\psi\rangle_{A|B_0B_1\ldots B_{N-1}})]^\alpha\leq \sum\limits_{j=0}^{N-1}\alpha^{\omega_{H}(\vec{j})}[\mathcal{N}_{sc}^a(\rho_{A|B_j})]^\alpha,
\end{equation}
for $0\leq\alpha\leq1$.
Inequalities \eqref{Sc1} and \eqref{ine3} are then further written as:
\begin{equation}\label{Sc2}
[\mathcal{N}_{sc}(|\psi\rangle_{A|B_0B_1\ldots B_{N-1}})]^\alpha\geq \sum\limits_{j=0}^{N-1}\alpha^{j}[\mathcal{N}_{sc}(\rho_{A|B_j})]^\alpha,
\end{equation}
for $\alpha\geq1$, and
\begin{equation}\label{ine4}
[\mathcal{N}_{sc}^a(|\psi\rangle_{A|B_0B_1\ldots B_{N-1}})]^\alpha\leq \sum\limits_{j=0}^{N-1}\alpha^{j}[\mathcal{N}_{sc}^a(\rho_{A|B_j})]^\alpha,
\end{equation}
for $0\leq\alpha\leq1$.

However, these inequalities can be further improved to be much tighter under certain conditions, thus providing tighter constraints of multiqubit entanglement.

\section{Tighter constraints for SCREN}

In this section, we first provide a tighter monogamy inequality related to the $\alpha$th power of SCREN for $\alpha\geq1$.
For $\alpha<0$, a polygamy inequality is also proposed.
We need the following lemma.

\begin{lemma}\cite{Yang}
Suppose $k$ is a real number satisfying $0< k\leq1$, then for any $0\leq x\leq k$,
we have
\begin{equation}\label{Negativity1}
(1+x)^\alpha\geq1+\frac{(1+k)^\alpha-1}{k^\alpha}x^\alpha,
\end{equation}
for $\alpha\geq1$.
\end{lemma}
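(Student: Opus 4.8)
The plan is to recast the inequality as a statement about the monotonicity of a single auxiliary function. Observe first that at the two endpoints $x=0$ and $x=k$ the claimed inequality holds with equality: at $x=0$ both sides equal $1$, while at $x=k$ the right-hand side collapses to $1+((1+k)^\alpha-1)=(1+k)^\alpha$, matching the left-hand side. This suggests that the inequality is controlled by the behaviour of the ratio in between, and the natural object to study is
\begin{equation}
f(x)=\frac{(1+x)^\alpha-1}{x^\alpha},\qquad x>0.
\end{equation}
Dividing the desired inequality (for $x>0$) by $x^\alpha>0$ shows it is equivalent to $f(x)\ge f(k)$, so it suffices to prove that $f$ is nonincreasing on $(0,\infty)$; then $0<x\le k$ forces $f(x)\ge f(k)$, and the case $x=0$ is the equality already noted.

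First I would differentiate. A direct computation gives
\begin{equation}
f'(x)=\frac{\alpha\,x^{\alpha-1}\bigl[(1+x)^{\alpha-1}x-(1+x)^\alpha+1\bigr]}{x^{2\alpha}}=\frac{\alpha\bigl[1-(1+x)^{\alpha-1}\bigr]}{x^{\alpha+1}},
\end{equation}
where the last equality uses $(1+x)^{\alpha-1}x-(1+x)^\alpha=-(1+x)^{\alpha-1}$. Since $\alpha\ge1$ and $x\ge0$ we have $(1+x)^{\alpha-1}\ge1$, so the bracket is nonpositive and hence $f'(x)\le0$ throughout $(0,\infty)$. Thus $f$ is nonincreasing, which completes the argument by the reduction above.

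The only real content is the monotonicity of $f$, and the main obstacle is simply arranging the derivative so that its sign is transparent; the algebraic simplification $(1+x)^{\alpha-1}x-(1+x)^\alpha+1=1-(1+x)^{\alpha-1}$ is what makes this work, after which the elementary bound $(1+x)^{\alpha-1}\ge1$ for $\alpha\ge1,\ x\ge0$ is immediate. As an alternative that avoids the quotient rule, one may substitute $t=1/x$ and study $g(t)=(1+t)^\alpha-t^\alpha$: since $g'(t)=\alpha\bigl[(1+t)^{\alpha-1}-t^{\alpha-1}\bigr]\ge0$ for $\alpha\ge1$, the function $g$ is nondecreasing in $t$, and because $f(x)=g(1/x)$ with $1/x$ decreasing in $x$, the function $f$ is nonincreasing in $x$, giving the same conclusion.
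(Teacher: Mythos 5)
Your proof is correct, and it is essentially the approach the paper relies on: the paper itself only cites \cite{Yang} for this lemma, but the identical technique appears in its own proof of Lemma 3 (the $\alpha<0$ analogue), namely monotonicity of $g(t)=(1+t)^\alpha-t^\alpha$ for $t\ge 1/k$ followed by the substitution $x=1/t$. Your primary quotient-rule computation on $f(x)=\bigl((1+x)^\alpha-1\bigr)/x^\alpha$ is just the reparametrization $f(x)=g(1/x)$ of that same argument --- indeed your closing ``alternative'' is exactly the paper's route --- so the two differ only cosmetically.
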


We have the following theorem.

\begin{theorem}\label{Nsc1}
For $\alpha\geq 1$ and any multiqubit pure state $|\psi\rangle_{AB_0\ldots B_{N-1}}$,
if the N-qubit subsystems $B_0, \ldots, B_{N-1}$ satisfy the following condition
\begin{equation}\label{order1}
k\mathcal{N}_{sc}(\rho_{A|B_j})\geq\mathcal{N}_{sc}(\rho_{A|B_{j+1}})\geq 0,
\end{equation}
where $j=0,1,\ldots,N-2$ and $0< k\leq1$,
then we have
\begin{equation}\label{nsc1}
[\mathcal{N}_{sc}(|\psi\rangle_{A|B_0B_1\ldots B_{N-1}})]^\alpha
\geq\sum\limits_{j=0}^{N-1}\Big(\frac{(1+k)^\alpha-1}{k^\alpha}\Big)^{\omega_H(\vec{j})}[\mathcal{N}_{sc}(\rho_{A|B_j})]^\alpha.
\end{equation}
\end{theorem}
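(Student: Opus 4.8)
The plan is to combine the basic tangle-type monogamy inequality \eqref{ine1} with the elementary estimate \eqref{Negativity1}, and then run an induction on the number $N$ of subsystems that exploits the recursive behaviour of the Hamming weight under a binary splitting of the index set.

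First I would reduce the left-hand side. Applying \eqref{ine1} to the pure state $|\psi\rangle_{A|B_0\cdots B_{N-1}}$ and raising both sides to the power $\alpha\ge1$ (which preserves the inequality, since $t\mapsto t^\alpha$ is increasing on $[0,\infty)$ and both sides are nonnegative) gives
\begin{equation}
[\mathcal{N}_{sc}(|\psi\rangle_{A|B_0\cdots B_{N-1}})]^\alpha \ge \Big(\sum_{j=0}^{N-1}\mathcal{N}_{sc}(\rho_{A|B_j})\Big)^\alpha .
\end{equation}
Writing $\mathcal{N}_j:=\mathcal{N}_{sc}(\rho_{A|B_j})$ and $T:=\frac{(1+k)^\alpha-1}{k^\alpha}$, it then suffices to establish the purely numerical inequality
\begin{equation}
\Big(\sum_{j=0}^{N-1}\mathcal{N}_j\Big)^\alpha \ge \sum_{j=0}^{N-1} T^{\omega_H(\vec{j})}\,\mathcal{N}_j^\alpha
\end{equation}
for any nonnegative reals obeying $k\mathcal{N}_j\ge\mathcal{N}_{j+1}\ge0$.

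I would prove this by induction on $N$. The base case $N=1$ is an equality since $\omega_H(\vec{0})=0$; for $N=2$ the ordering gives $\mathcal{N}_1\le k\mathcal{N}_0$, so with $x=\mathcal{N}_1/\mathcal{N}_0\le k$ the estimate \eqref{Negativity1} yields $(\mathcal{N}_0+\mathcal{N}_1)^\alpha=\mathcal{N}_0^\alpha(1+x)^\alpha\ge\mathcal{N}_0^\alpha+T\mathcal{N}_1^\alpha$ (the degenerate case $\mathcal{N}_0=0$, which forces all $\mathcal{N}_j=0$, is trivial). For general $N\ge2$ I would pick the integer $n$ with $2^{n-1}<N\le2^n$ and split at $2^{n-1}$, setting $S_1=\sum_{j=0}^{2^{n-1}-1}\mathcal{N}_j$ and $S_2=\sum_{j=2^{n-1}}^{N-1}\mathcal{N}_j$. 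For each $j=2^{n-1}+i$ in the second block the $(n-1)$-th binary digit equals $1$ while the remaining digits coincide with those of $i$, so $\omega_H(\vec{j})=1+\omega_H(\vec{i})$. This lets me regroup the right-hand side as one sum over each block, the second carrying an extra factor $T$, and the induction hypothesis (both blocks have fewer than $N$ terms and inherit the ordering condition) bounds it by $S_1^\alpha+T S_2^\alpha$.

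The final step, and the main point to get right, is the aggregated comparison $S_2\le kS_1$. Iterating the ordering condition gives $\mathcal{N}_{2^{n-1}+i}\le k^{2^{n-1}}\mathcal{N}_i\le k\mathcal{N}_i$ for each $i$ (using $0<k\le1$ and $2^{n-1}\ge1$), so summing over the shifted indices and discarding the surplus nonnegative terms of $S_1$ yields $S_2\le kS_1$. With $x=S_2/S_1\le k$, a second application of \eqref{Negativity1} gives $(S_1+S_2)^\alpha=S_1^\alpha(1+x)^\alpha\ge S_1^\alpha+T S_2^\alpha$, and chaining this with the previous bound closes the induction. I expect the principal obstacle to be organizational rather than analytic: correctly tracking the Hamming-weight bookkeeping across the binary split and confirming that the aggregated ratio $S_2/S_1$ still satisfies the hypothesis required by \eqref{Negativity1}, rather than any single difficult estimate.
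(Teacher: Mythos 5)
Your proof is correct, and its analytic core --- the reduction via \eqref{ine1}, Lemma 1 (inequality \eqref{Negativity1}) applied to an aggregated ratio, the binary split at $2^{n-1}$, the Hamming-weight recursion $\omega_H(\vec{j})=1+\omega_H(\vec{i})$ for $j=2^{n-1}+i$, and the bound $S_2\leq k S_1$ obtained by iterating \eqref{order1} --- is exactly the paper's. Where you genuinely differ is the induction scheme, and hence the treatment of general $N$. The paper runs its induction only over $N=2^n$ (splitting evenly in half each time), and then handles arbitrary $N$ by a separate quantum step: it pads $|\psi\rangle$ with an auxiliary product state $|\phi\rangle_{B_N\ldots B_{2^n-1}}$ to form a $(2^n+1)$-qubit state $|\Gamma\rangle$ as in \eqref{gamma1}, applies the power-of-2 result to $|\Gamma\rangle$, and uses separability of $|\Gamma\rangle$ across the bipartition $AB_0\ldots B_{N-1}$ versus $B_N\ldots B_{2^n-1}$ to conclude that the extra terms $\mathcal{N}_{sc}(\sigma_{A|B_j})$, $j\geq N$, vanish. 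Your strong induction on $N$ itself --- splitting at the largest power of $2$ below $N$, so that block 2 has $N-2^{n-1}\leq 2^{n-1}$ terms and both blocks fall under the induction hypothesis --- makes the padding unnecessary: after the single application of \eqref{ine1}, the whole argument is purely arithmetic. What the paper's route buys is a marginally simpler inductive step (both halves have the same size, so the second block estimate is the first "by relabeling"); what your route buys is the elimination of the auxiliary-state construction and of the separability observation, which is arguably the least transparent part of the original proof. One small point to make explicit when you write this up: the degenerate case $S_1=0$ (which by \eqref{order1} forces all $\mathcal{N}_j=0$) needs the same one-line dismissal you gave at $N=2$, since you divide by $S_1$ before invoking \eqref{Negativity1}.
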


\begin{proof}
Similar to the proof in \cite{JSK7}, from Eq. \eqref{ine1}, we only need to prove
\begin{equation}\label{nc1}
\left[\sum\limits_{j=0}^{N-1}\mathcal{N}_{sc}(\rho_{A|B_j})\right]^\alpha
\geq\sum\limits_{j=0}^{N-1}\Big(\frac{(1+k)^\alpha-1}{k^\alpha}\Big)^{\omega_{H}(\vec{j})}
[\mathcal{N}_{sc}(\rho_{A|B_j})]^\alpha.
\end{equation}
We first show that the inequality \eqref{nc1} holds for the case of $N=2^n$.
For $n=1$ and a three-qubit pure state $|\psi\rangle_{AB_0B_1}$,
from \eqref{Negativity1} and \eqref{order1}, one has
\begin{equation}
\begin{array}{rl}
&[\mathcal{N}_{sc}(\rho_{A|B_0})+\mathcal{N}_{sc}(\rho_{A|B_1})]^\alpha
=[\mathcal{N}_{sc}(\rho_{A|B_0})]^\alpha\Big(1+\frac{\mathcal{N}_{sc}(\rho_{A|B_1})}{\mathcal{N}_{sc}(\rho_{A|B_0})}\Big)^\alpha\\[4.0mm]
&\ \ \ \ \ \ \ \ \ \ \ \ \ \ \ \ \ \ \ \ \ \ \ \ \ \ \ \ \ \ \ \ \ \ \ \ \ \ \ \ \ \ \geq
[\mathcal{N}_{sc}(\rho_{A|B_0})]^\alpha \Bigg[1+\displaystyle\frac{(1+k)^\alpha-1}{k^\alpha}\Bigg(\frac{\mathcal{N}_{sc}(\rho_{A|B_1})}{\mathcal{N}_{sc}(\rho_{A|B_0})}\Bigg)^\alpha\Bigg]\\[4.0mm]
&\ \ \ \ \ \ \ \ \ \ \ \ \ \ \ \ \ \ \ \ \ \ \ \ \ \ \ \ \ \ \ \ \ \ \ \ \ \ \ \ \ \ =
[\mathcal{N}_{sc}(\rho_{A|B_0})]^\alpha+\displaystyle\frac{(1+k)^\alpha-1}{k^\alpha}[\mathcal{N}_{sc}(\rho_{A|B_1})]^\alpha,
\end{array}
\end{equation}
Thus, \eqref{nc1} holds for $n=1$.
Assume that inequality \eqref{nc1} holds for $N=2^{n-1}$ with $n\geq 1$. We consider the case of $N=2^n$.
For arbitrary $(N + 1)$-qubit pure state $|\psi\rangle_{AB_0B_1\ldots B_{N-1}}$ and its two-qubit reduced density matrices
$\rho_{AB_j}$, $j=0,1,\ldots,N-1$,
one has $\mathcal{N}_{sc}(\rho_{A|B_{j+2^{n-1}}})\leq k^{2^{n-1}}\mathcal{N}_{sc}(\rho_{A|B_j})$ from \eqref{order1}.
Then, we find
$$
0\leq\frac{\sum\nolimits_{j=2^{n-1}}^{2^n-1}\mathcal{N}_{sc}(\rho_{A|B_j})}{\sum\nolimits_{j=0}^{2^{n-1}-1}
\mathcal{N}_{sc}(\rho_{A|B_j})}\leq k^{2^{n-1}}\leq k,
$$
which implies that
\begin{equation}
\Bigg(1+\frac{\sum_{j=2^{n-1}}^{2^n-1}\mathcal{N}_{sc}(\rho_{A|B_j})}{\sum_{j=0}^{2^{n-1}-1}\mathcal{N}_{sc}
(\rho_{A|B_j})}\Bigg)^\alpha\geq 1+\displaystyle\frac{(1+k)^\alpha-1}{k^\alpha}
\Bigg(\frac{\sum_{j=2^{n-1}}^{2^n-1}\mathcal{N}_{sc}(\rho_{A|B_j})}
{\sum_{j=0}^{2^{n-1}-1}\mathcal{N}_{sc}(\rho_{A|B_j})}\Bigg)^\alpha.
\end{equation}
Thus,
\begin{equation}
\begin{array}{rl}
&\Bigg(\sum\nolimits_{j=0}^{N-1}\mathcal{N}(\rho_{A|B_j})\Bigg)^\alpha
=\Bigg(\sum\nolimits_{j=0}^{2^{n-1}-1}\mathcal{N}_{sc}(\rho_{A|B_j})+\sum\nolimits_{j=2^{n-1}}^{2^n-1}\mathcal{N}_{sc}(\rho_{A|B_j})\Bigg)^\alpha\\[4mm]
&\ \ \ \ \ \ \ \ \ \ \ \ \ \ \ \ \ \ \ \ \ \ \ \ \ \
=\Bigg(\sum\nolimits_{j=0}^{2^{n-1}-1}\mathcal{N}_{sc}(\rho_{A|B_j})\Bigg)^\alpha
\Bigg(1+\frac{\sum_{j=2^{n-1}}^{2^n-1}\mathcal{N}_{sc}(\rho_{A|B_j})}{\sum_{j=0}^{2^{n-1}-1}\mathcal{N}_{sc}
(\rho_{A|B_j})}\Bigg)^\alpha\\[4mm]
&\ \ \ \ \ \ \ \ \ \ \ \ \ \ \ \ \ \ \ \ \ \ \ \ \ \  \geq\Bigg(\sum\nolimits_{j=0}^{2^{n-1}-1}\mathcal{N}_{sc}(\rho_{A|B_j})\Bigg)^\alpha \Bigg[1+\displaystyle\frac{(1+k)^\alpha-1}{k^\alpha}
\Bigg(\frac{\sum_{j=2^{n-1}}^{2^n-1}\mathcal{N}_{sc}(\rho_{A|B_j})}
{\sum_{j=0}^{2^{n-1}-1}\mathcal{N}_{sc}(\rho_{A|B_j})}\Bigg)^\alpha\Bigg]\\[4mm]
&\ \ \ \ \ \ \ \ \ \ \ \ \ \ \ \ \ \ \ \ \ \ \ \ \ \  =\Bigg(\sum\nolimits_{j=0}^{2^{n-1}-1}\mathcal{N}_{sc}(\rho_{A|B_j})\Bigg)^\alpha
+\displaystyle\frac{(1+k)^\alpha-1}{k^\alpha}\Bigg(\sum\nolimits_{j=2^{n-1}}^{2^n-1}\mathcal{N}_{sc}(\rho_{A|B_j})\Bigg)^\alpha.
\end{array}
\end{equation}
Since we have assumed that
$$
\Bigg(\sum\nolimits_{j=0}^{2^{n-1}-1}\mathcal{N}_{sc}(\rho_{A|B_j})\Bigg)^\alpha\geq
\sum\nolimits_{j=0}^{2^{n-1}-1}\Big(\frac{(1+k)^\alpha-1}{k^\alpha}\Big)^{\omega_H(\vec{j})-1}[\mathcal{N}_{sc}(\rho_{A|B_j})]^\alpha,
$$
by relabeling the subsystems, we can always have
$$
\Bigg(\sum\nolimits_{j=2^{n-1}}^{2^n-1}\mathcal{N}_{sc}(\rho_{A|B_j})\Bigg)^\alpha\geq
\sum\nolimits_{j=2^{n-1}}^{2^n-1}\Big(\frac{(1+k)^\alpha-1}{k^\alpha}\Big)^{\omega_H(\vec{j})-1}[\mathcal{N}_{sc}(\rho_{A|B_j})]^\alpha.
$$
Then we have
$$
\Bigg(\sum\nolimits_{j=0}^{2^n-1}\mathcal{N}_{sc}(\rho_{A|B_j})\Bigg)^\alpha\geq\\[5mm]
\sum\nolimits_{j=0}^{2^n-1}\Big(\frac{(1+k)^\alpha-1}{k^\alpha}\Big)^{\omega_H(\vec{j})}[\mathcal{N}_{sc}(\rho_{A|B_j})]^\alpha.
$$

As there always exists an positive integer $n$ such that $0\leq N\leq 2^n$ for some positive integer $N$,
we consider a $(2^n+1)$-qubit pure state,
\begin{equation}\label{gamma1}
|\Gamma\rangle_{AB_0B_1\ldots B_{2^n-1}}=|\psi\rangle_{AB_0B_1\ldots B_{N-1}}\oplus |\phi\rangle_{B_N\ldots B_{2^n-1}},
\end{equation}
which is a product of $|\psi\rangle_{AB_0B_1\ldots B_{N-1}}$ and an arbitrary $(2^n-N)$-qubit pure state $|\phi\rangle_{B_N\ldots B_{2^n-1}}$ \cite{JSK7}.
Then we have
\begin{equation}
[\mathcal{N}_{sc}(|\Gamma\rangle_{AB_0B_1\ldots B_{2^n-1}})]^\alpha
\geq\sum\nolimits_{j=0}^{2^n-1}\Big(\frac{(1+k)^\alpha-1}{k^\alpha}\Big)^{\omega_H(\vec{j})}[\mathcal{N}_{sc}(\sigma_{A|B_j})]^\alpha
\end{equation}
with $\sigma_{A|B_j}$ being the two-qubit reduced density matrix of $|\Gamma\rangle_{AB_0B_1\ldots B_{2^n-1}}$ for each $j=0,1,\ldots,2^n-1$.
Thus,
\begin{equation}
\begin{array}{rl}
&[\mathcal{N}_{sc}(|\psi\rangle_{A|B_0B_1\ldots B_{N-1}})]^\alpha
=[\mathcal{N}_{sc}(|\Gamma\rangle_{A|B_0B_1\ldots B_{2^n-1}})]^\alpha\\[2.0mm]
&\ \ \ \ \ \ \ \ \ \ \ \ \ \ \ \ \ \ \ \ \ \ \ \ \ \ \ \ \ \ \ \ \ \ \ \geq\sum\nolimits_{j=0}^{2^n-1}\Big(\frac{(1+k)^\alpha-1}{k^\alpha}\Big)^{\omega_H(\vec{j})}[\mathcal{N}_{sc}(\sigma_{A|B_j})]^\alpha\\[2.0mm]
&\ \ \ \ \ \ \ \ \ \ \ \ \ \ \ \ \ \ \ \ \ \ \ \ \ \ \ \ \ \ \ \ \ \ \
=\sum\nolimits_{j=0}^{N-1}\Big(\frac{(1+k)^\alpha-1}{k^\alpha}\Big)^{\omega_H(\vec{j})}[\mathcal{N}_{sc}(\rho_{A|B_j})]^\alpha,
\end{array}
\end{equation}
since $|\Gamma\rangle_{A|B_0B_1\ldots B_{2^n-1}}$ is separable with respect to the bipartition between $AB_0\ldots B_{N-1}$ and $B_N\ldots B_{2^n-1}$.
\end{proof}

As $\Big(\frac{(1+k)^\alpha-1}{k^\alpha}\Big)^{\omega_H(\vec{j})}\geq\alpha^{\omega_H(\vec{j})}$ when $\alpha\geq1$,
we find that for any multiqubit pure state $|\psi\rangle_{A|B_0B_1\ldots B_{N-1}}$,
$[\mathcal{N}_{sc}(|\psi\rangle_{A|B_0B_1\ldots B_{N-1}})]^\alpha
\geq\sum\nolimits_{j=0}^{N-1}\Big(\frac{(1+k)^\alpha-1}{k^\alpha}\Big)^{\omega_H(\vec{j})}[\mathcal{N}_{sc}(\rho_{A|B_j})]^\alpha
\geq\sum\nolimits_{j=0}^{N-1}\alpha^{\omega_H(\vec{j})}[\mathcal{N}_{sc}(\rho_{A|B_j})]^\alpha$ with $\alpha\geq1$.
Thus,
inequality \eqref{nsc1} of Theorem \ref{Nsc1} is  tighter than inequality \eqref{Sc1} for any multiqubit pure state.

Here, we give an example to show that our new monogamy inequality is indeed tighter than the previous one given in \cite{JSK7}.

\bigskip
\noindent $\mathbf{Example} \ \ $
Let us consider a tripartite quantum state
\begin{equation}\label{psi}
|\psi\rangle_{ABC}=\frac{1}{\sqrt{6}}(|012\rangle-|021\rangle+|120\rangle-|102\rangle+|201\rangle-|210\rangle).
\end{equation}
Then we have $\mathcal{N}_{sc}(|\psi\rangle_{A|BC})=4$ and $\mathcal{N}_{sc}(|\psi\rangle_{A|B})=\mathcal{N}_{sc}(|\psi\rangle_{A|C})=1$ \cite{JSK7}.
Note that in this case $k=1$, and $[\mathcal{N}_{sc}(|\psi\rangle_{A|B})]^\alpha+\frac{(1+k)^\alpha-1}{k^\alpha}[\mathcal{N}_{sc}(|\psi\rangle_{A|c})]^\alpha
=1+\frac{(1+k)^\alpha-1}{k^\alpha}
=2^\alpha\geq
[\mathcal{N}_{sc}(|\psi\rangle_{A|B}]^\alpha+\alpha[\mathcal{N}_{sc}(|\psi\rangle_{A|c}]^\alpha=1+\alpha$ for $\alpha\geq1$.

Furthermore, by using Lemma 1, we can also improve inequality \eqref{nsc1} to be a tighter one under certain condition.

\begin{theorem}\label{Nsc2}
Suppose $k$ is a real number satisfying $0< k\leq1$.
For $\alpha\geq 1$ and any multiqubit pure state $|\psi\rangle_{AB_0\ldots B_{N-1}}$,
\begin{equation}\label{nsc2}
[\mathcal{N}_{sc}(|\psi\rangle_{A|B_0B_1\ldots B_{N-1}})]^\alpha
\geq\sum\nolimits_{j=0}^{N-1}\Big(\frac{(1+k)^\alpha-1}{k^\alpha}\Big)^j[\mathcal{N}_{sc}(\rho_{A|B_j})]^\alpha,
\end{equation}
if $k\mathcal{N}_{sc}(\rho_{A|B_j})\geq\sum\nolimits_{j=i+1}^{N-1}\mathcal{N}_{sc}(\rho_{A|B_j})$
for $i=0,1,\ldots, N-2$.
\end{theorem}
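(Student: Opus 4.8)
The plan is to follow the strategy of Theorem~\ref{Nsc1}: reduce to a purely numerical inequality via \eqref{ine1}, then induct on $N$. The crucial difference is that the new hypothesis bounds each term by the \emph{sum} of all subsequent terms rather than by the single next term, and it is exactly this stronger aggregation that replaces the Hamming weight $\omega_H(\vec{j})$ by the bare index $j$ in the exponent. Set $\mu = \frac{(1+k)^\alpha-1}{k^\alpha}$ and write $x_j = \mathcal{N}_{sc}(\rho_{A|B_j})$. Since $\alpha\geq 1$ and $t\mapsto t^\alpha$ is increasing on $[0,\infty)$, inequality \eqref{ine1} shows it suffices to establish
\begin{equation*}
\left(\sum_{j=0}^{N-1}x_j\right)^\alpha \geq \sum_{j=0}^{N-1}\mu^j\, x_j^\alpha .
\end{equation*}

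First I would dispose of the base case $N=1$, where the inequality is an equality. For the inductive step, assuming the claim for families of size $N-1$, the hypothesis with $i=0$ reads $k x_0 \geq \sum_{j=1}^{N-1}x_j$, so $0 \leq \bigl(\sum_{j=1}^{N-1}x_j\bigr)/x_0 \leq k$. Applying Lemma~1 with $x = \bigl(\sum_{j=1}^{N-1}x_j\bigr)/x_0$ gives
\begin{equation*}
\left(x_0 + \sum_{j=1}^{N-1}x_j\right)^\alpha = x_0^\alpha\left(1 + \frac{\sum_{j=1}^{N-1}x_j}{x_0}\right)^\alpha \geq x_0^\alpha + \mu\left(\sum_{j=1}^{N-1}x_j\right)^\alpha .
\end{equation*}

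The key observation is that the truncated family $B_1,\ldots,B_{N-1}$, relabeled to positions $0,\ldots,N-2$, again satisfies the hypotheses of the theorem: for each $i\geq 1$ the required bound $k x_i \geq \sum_{j=i+1}^{N-1}x_j$ is precisely one of the conditions already assumed for the full family. Hence the induction hypothesis applies and yields $\bigl(\sum_{j=1}^{N-1}x_j\bigr)^\alpha \geq \sum_{j=1}^{N-1}\mu^{j-1}x_j^\alpha$, where the shift $j\mapsto j-1$ records the relabeling. Substituting this back,
\begin{equation*}
\left(\sum_{j=0}^{N-1}x_j\right)^\alpha \geq x_0^\alpha + \mu\sum_{j=1}^{N-1}\mu^{j-1}x_j^\alpha = \sum_{j=0}^{N-1}\mu^j x_j^\alpha ,
\end{equation*}
which closes the induction, and the passage back to \eqref{nsc2} is immediate from \eqref{ine1}.

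I expect the only points requiring care to be bookkeeping rather than genuine obstacles: verifying that the hypotheses descend to the subfamily $\{B_1,\ldots,B_{N-1}\}$ (they do, being a subset of the original conditions), and tracking the exponent so that the single prefactor $\mu$ in front of $\bigl(\sum_{j\geq 1}x_j\bigr)^\alpha$ converts each $\mu^{j-1}$ into exactly $\mu^j$. Notably, unlike Theorem~\ref{Nsc1}, there is no need to embed $N$ into a power of $2$ via a padded state $|\Gamma\rangle$, since the exponent here is the index itself and no Hamming-weight counting over dyadic blocks enters; this makes the argument more direct than that of the previous theorem.
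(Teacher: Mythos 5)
Your proof is correct and coincides with the approach the paper intends: the paper's entire proof of this theorem is the remark that it is similar to the one in \cite{JSK7}, which is exactly the peeling induction you carry out (strip off $x_0$, apply the two-term inequality, recurse on the truncated family), with Lemma 1 playing the role of Kim's inequality $(1+t)^\alpha\geq 1+\alpha t^\alpha$. The only point glossed over, the division by $x_0=\mathcal{N}_{sc}(\rho_{A|B_0})$, is harmless, since $x_0=0$ together with the hypothesis for $i=0$ forces every term to vanish.
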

\begin{proof}
The proof is similar to the one given in \cite{JSK7}.
\end{proof}

In the next, we discuss the polygamy of entanglement related to the $\alpha$th power of SCREN for $\alpha<0$.
We have the following theorem.

\begin{theorem}\label{Nsc2}
For any multiqubit pure state $|\psi\rangle_{AB_0\ldots B_{N-1}}$ with $\mathcal{N}_{sc}(\rho_{AB_i})\neq0$,
$i=0,1,\ldots,N-1$,
we have
\begin{equation}\label{SCREN}
[\mathcal{N}_{sc}(|\psi\rangle_{A|B_0B_1\ldots B_{N-1}})]^\alpha
\leq\frac{1}{N}\sum\nolimits_{j=0}^{N-1}[\mathcal{N}_{sc}(\rho_{A|B_j})]^\alpha,
\end{equation}
for all $\alpha<0$.
\end{theorem}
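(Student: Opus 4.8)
The plan is to reduce everything to the basic monogamy relation \eqref{ine1} and then exploit that $t\mapsto t^{\alpha}$ is strictly decreasing on $(0,\infty)$ when $\alpha<0$. First I would invoke \eqref{ine1}, which in the present notation reads
\[
\mathcal{N}_{sc}(|\psi\rangle_{A|B_0B_1\ldots B_{N-1}})\geq\sum_{j=0}^{N-1}\mathcal{N}_{sc}(\rho_{A|B_j}).
\]
Write $S=\sum_{j=0}^{N-1}\mathcal{N}_{sc}(\rho_{A|B_j})$. The hypothesis $\mathcal{N}_{sc}(\rho_{AB_i})\neq 0$ together with the nonnegativity of SCREN guarantees $\mathcal{N}_{sc}(\rho_{A|B_j})>0$ for every $j$, hence $S>0$ and also $\mathcal{N}_{sc}(|\psi\rangle_{A|B_0\ldots B_{N-1}})>0$, so every negative power appearing below is well defined and finite.

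Because $\alpha<0$, raising both sides of the monogamy relation to the $\alpha$th power reverses the inequality, giving
\[
[\mathcal{N}_{sc}(|\psi\rangle_{A|B_0\ldots B_{N-1}})]^{\alpha}\leq S^{\alpha}.
\]
It then remains to bound $S^{\alpha}$ by the right-hand side of \eqref{SCREN}. For this I would use the elementary observation that for each fixed $j$ one has $S\geq\mathcal{N}_{sc}(\rho_{A|B_j})>0$, and since $t\mapsto t^{\alpha}$ is decreasing this yields $S^{\alpha}\leq[\mathcal{N}_{sc}(\rho_{A|B_j})]^{\alpha}$. Summing this over the $N$ indices $j=0,1,\ldots,N-1$ and dividing by $N$ produces
\[
S^{\alpha}\leq\frac{1}{N}\sum_{j=0}^{N-1}[\mathcal{N}_{sc}(\rho_{A|B_j})]^{\alpha},
\]
and combining with the previous display gives \eqref{SCREN}.

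I expect no hard analytic step here: the argument is a short consequence of \eqref{ine1} plus the monotonicity of the power function. The only points needing care are the bookkeeping of the two inequality reversals (both the passage from the monogamy relation to $S^{\alpha}$ and the per-term comparison reverse direction precisely because $\alpha<0$), and the verification that the nonvanishing hypothesis is exactly what keeps every base strictly positive, so that $t^{\alpha}$ remains finite and order-reversing. One could alternatively route the last step through the convexity of $t\mapsto t^{\alpha}$ for $\alpha<0$ (Jensen, or the power-mean inequality), which in fact yields the sharper constant $N^{\alpha-1}\leq 1/N$; but the direct term-by-term comparison already delivers the stated bound with minimal machinery, so that is the route I would take.
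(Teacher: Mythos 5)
Your proof is correct, but it is genuinely different from the paper's. The paper follows the route of Jin--Fei (its Ref.~[SM.Fei2]): it first proves the three-party bound $[\mathcal{N}_{sc}(|\psi\rangle_{A|B_0B_1})]^\alpha<\frac{1}{2}\{[\mathcal{N}_{sc}(\rho_{A|B_0})]^\alpha+[\mathcal{N}_{sc}(\rho_{A|B_1})]^\alpha\}$, then iterates it on the grouped reductions $\rho_{A|B_iB_{i+1}\ldots B_{N-1}}$ to obtain a weighted inequality with geometric coefficients $\frac{1}{2},(\frac{1}{2})^2,\ldots,(\frac{1}{2})^{N-2},(\frac{1}{2})^{N-2}$ summing to $1$, and finally cyclically permutes the labels $B_0,\ldots,B_{N-1}$ and sums the $N$ resulting inequalities to symmetrize the weights into the uniform $\frac{1}{N}$. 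You instead apply the full $N$-partite monogamy inequality \eqref{ine1} exactly once, pass to the $\alpha$th power (reversing the inequality since $\alpha<0$ and all quantities are strictly positive by hypothesis), and then use the trivial bound $S\geq\mathcal{N}_{sc}(\rho_{A|B_j})$ term by term before averaging. Your route is more elementary: it needs no induction, no cyclic symmetrization, and no application of monogamy to intermediate mixed states $\rho_{A|B_iB_{i+1}\ldots B_{N-1}}$ (which the paper's iteration implicitly requires and which is a stronger input than the pure-state inequality \eqref{ine1} alone). It also yields a slightly stronger intermediate statement, namely $[\mathcal{N}_{sc}(|\psi\rangle_{A|B_0\ldots B_{N-1}})]^\alpha\leq\min_j[\mathcal{N}_{sc}(\rho_{A|B_j})]^\alpha$, of which \eqref{SCREN} is an average, and, as you note, Jensen's inequality would even sharpen the constant to $N^{\alpha-1}\leq\frac{1}{N}$. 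What the paper's longer argument buys in exchange is the family of asymmetric weighted inequalities with geometric coefficients, which are of independent interest and parallel the weighted-constraint theme of the rest of the paper.
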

\begin{proof}
We follow the proof given in \cite{SM.Fei2}.
For arbitrary tripartite state, we have
\begin{equation}\label{SCREN1}
[\mathcal{N}_{sc}(|\psi\rangle_{A|B_0B_1})]^\alpha
\leq[\mathcal{N}_{sc}(\rho_{A|B_0})+\mathcal{N}_{sc}(\rho_{A|B_1})]^\alpha
=\mathcal{N}_{sc}(\rho_{A|B_0})^\alpha\Big(1+\frac{\mathcal{N}_{sc}(\rho_{A|B_1})}{\mathcal{N}_{sc}(\rho_{A|B_0})}\Big)^\alpha
<[\mathcal{N}_{sc}(\rho_{A|B_0})]^\alpha,
\end{equation}
where the first inequality is due to $\alpha<0$ and the second inequality is due to
$\Big(1+\frac{\mathcal{N}_{sc}(\rho_{A|B_1})}{\mathcal{N}_{sc}(\rho_{A|B_0})}\Big)^\alpha<1$.
Similarly, we get
\begin{equation}\label{SCREN2}
[\mathcal{N}_{sc}(|\psi\rangle_{A|B_0B_1})]^\alpha<[\mathcal{N}_{sc}(\rho_{A|B_1})]^\alpha.
\end{equation}
From \eqref{SCREN1} and \eqref{SCREN2}, we obtain
\begin{equation}\label{SCREN3}
[\mathcal{N}_{sc}(|\psi\rangle_{A|B_0B_1})]^\alpha
<\frac{1}{2}\{[\mathcal{N}_{sc}(\rho_{A|B_0})]^\alpha+[\mathcal{N}_{sc}(\rho_{A|B_1})]^\alpha\}.
\end{equation}
One can get
\begin{equation}\label{SCREN4}
\begin{array}{rl}
&[\mathcal{N}_{sc}(|\psi\rangle_{A|B_0B_1\ldots B_{N-1}})]^\alpha
<\frac{1}{2}\{[\mathcal{N}_{sc}(\rho_{A|B_0})]^\alpha+[\mathcal{N}_{sc}(\rho_{A|B_1\ldots B_{N-1}})]^\alpha\}\\[2.0mm]
&\ \ \ \ \ \ \ \ \ \ \ \ \ \ \ \ \ \ \ \ \ \ \ \ \ \ \ \ \ \ \ \ \ \ \
<\frac{1}{2}[\mathcal{N}_{sc}(\rho_{A|B_0})]^\alpha+(\frac{1}{2})^2[\mathcal{N}_{sc}(\rho_{A|B_1})]^\alpha
+(\frac{1}{2})^2[\mathcal{N}_{sc}(\rho_{A|B_2\ldots B_{N-1}})]^\alpha\\[2.0mm]
&\ \ \ \ \ \ \ \ \ \ \ \ \ \ \ \ \ \ \ \ \ \ \ \ \ \ \ \ \ \ \ \ \ \ \ <\ldots \\[2.0mm]
&\ \ \ \ \ \ \ \ \ \ \ \ \ \ \ \ \ \ \ \ \ \ \ \ \ \ \ \ \ \ \ \ \ \ \
<\frac{1}{2}[\mathcal{N}_{sc}(\rho_{A|B_0})]^\alpha+(\frac{1}{2})^2[\mathcal{N}_{sc}(\rho_{A|B_1})]^\alpha+\ldots\\[2.0mm]
&\ \ \ \ \ \ \ \ \ \ \ \ \ \ \ \ \ \ \ \ \ \ \ \ \ \ \ \ \ \ \ \ \ \ \
+(\frac{1}{2})^{N-2}[\mathcal{N}_{sc}(\rho_{A|B_{N-2}})]^\alpha
+(\frac{1}{2})^{N-2}[\mathcal{N}_{sc}(\rho_{A|B_{N-1}})]^\alpha.
\end{array}
\end{equation}
By cyclically permuting the subindices $B_0$, $B_1,$ $\ldots$, $B_{N-1}$ in \eqref{SCREN4}, we can get a set of inequalities.
Summing up these inequalities, we have \eqref{SCREN}.
\end{proof}

\section{Tighter constraints for SCRENoA}

In this section, we provide a class of tighter polygamy inequalities of multiqubit entanglement
in terms of the $\alpha$-powered SCRENoA and the Hamming weight of the binary vector related to the distribution of subsystems
for $0\leq\alpha\leq1$.
For the case of $\alpha<0$, we also propose a monogamy relation for SCRENoA.

We need the following Lemma.

\begin{lemma}\cite{Yang}
Suppose $k$ is a real number satisfying $0< k\leq1$, then for any $0\leq x\leq k$,
we have
\begin{equation}\label{Negativity2}
(1+x)^\alpha\leq1+\frac{(1+k)^\alpha-1}{k^\alpha}x^\alpha,
\end{equation}
for $0\leq \alpha\leq 1$.
\end{lemma}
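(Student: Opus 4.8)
The final statement to prove is Lemma (the second one, inequality~\eqref{Negativity2}): for $0<k\leq 1$, any $0\leq x\leq k$, and $0\leq\alpha\leq 1$,
\begin{equation*}
(1+x)^\alpha\leq 1+\frac{(1+k)^\alpha-1}{k^\alpha}x^\alpha.
\end{equation*}
This is the $0\leq\alpha\leq 1$ counterpart of the first Lemma (inequality~\eqref{Negativity1}), with the inequality reversed, and it serves as the analytic engine for the SCRENoA polygamy results in exactly the way~\eqref{Negativity1} drives the SCREN monogamy results. My plan is to reduce it to a one-variable calculus statement by normalizing away $x$, then establish concavity/monotonicity of a single auxiliary function.

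\textbf{Main approach.} First I would dispose of the trivial boundary cases: when $x=0$ both sides equal $1$, and when $x=k$ both sides equal $(1+k)^\alpha$, so equality holds at the two endpoints. This already suggests the inequality is an interpolation statement sitting between two equality points, which is the hallmark of a concavity argument. For the interior I would divide through by $x^\alpha$ (legitimate for $x>0$) and substitute $t=1/x$, or equivalently study the function
\begin{equation*}
f(x)=\frac{(1+x)^\alpha-1}{x^\alpha}
\end{equation*}
on $(0,k]$. The claimed inequality is exactly $f(x)\leq f(k)$ for all $x\in(0,k]$, i.e. that $f$ is nondecreasing on $(0,k]$. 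So the whole lemma collapses to showing $f'(x)\geq 0$ for $0\leq\alpha\leq 1$.

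\textbf{Key step.} Computing $f'(x)$ and clearing positive factors, monotonicity of $f$ is equivalent to the sign of
\begin{equation*}
g(x)=\alpha x(1+x)^{\alpha-1}-\alpha\bigl[(1+x)^\alpha-1\bigr]=\alpha\bigl[x(1+x)^{\alpha-1}-(1+x)^\alpha+1\bigr].
\end{equation*}
Here I would check $g(0)=0$ and then differentiate once more: $g'(x)=\alpha(\alpha-1)x(1+x)^{\alpha-2}$, which is $\leq 0$ precisely because $0\leq\alpha\leq 1$ makes $\alpha(\alpha-1)\leq 0$. Hence $g$ is nonincreasing from $g(0)=0$, giving $g(x)\leq 0$; tracing the signs back through the quotient rule then yields $f'(x)\geq 0$ (the reversal coming from the placement of $g$ in the numerator), so $f$ is nondecreasing and $f(x)\leq f(k)$ as required. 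The only genuinely delicate point is bookkeeping the direction of each inequality under the sign of $\alpha-1$: because $0\leq\alpha\leq 1$ the power function $(1+x)^\alpha$ is concave, which is ultimately what forces the reversed inequality relative to~\eqref{Negativity1}, so I would be careful that every step respects that concavity rather than mechanically copying the $\alpha\geq 1$ proof.

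\textbf{Main obstacle.} I expect no serious obstacle beyond the sign-tracking just described; the argument is a routine second-derivative/monotonicity computation. The subtlety worth flagging is continuity at the degenerate endpoints and the limit $x\to 0^+$, where $x^\alpha\to 0$: one should confirm that $f$ extends continuously (or simply restrict attention to $x>0$ and invoke the already-verified equality at $x=0$ separately) so that the monotonicity conclusion legitimately covers the closed interval $[0,k]$. Once the monotonicity $f(x)\leq f(k)$ is in hand, rearranging $(1+x)^\alpha-1=f(x)\,x^\alpha\leq f(k)\,x^\alpha=\frac{(1+k)^\alpha-1}{k^\alpha}x^\alpha$ gives the claim immediately.
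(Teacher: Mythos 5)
Your overall reduction is fine: the lemma is indeed equivalent to showing that $f(x)=\bigl[(1+x)^\alpha-1\bigr]/x^\alpha$ is nondecreasing on $(0,k]$, and (via the substitution $t=1/x$, which you mention but do not pursue) this is exactly the template this paper uses to prove its $\alpha<0$ analogue, where one studies $h(t)=(1+t)^\alpha-t^\alpha$ on $[1/k,\infty)$. The genuine gap is in your key step: the derivative of your auxiliary function is computed incorrectly, and the final conclusion is then rescued by a second, compensating error. With $g(x)=\alpha\bigl[x(1+x)^{\alpha-1}-(1+x)^\alpha+1\bigr]$ one actually has
\begin{equation*}
g'(x)=\alpha\bigl[(1+x)^{\alpha-1}+(\alpha-1)x(1+x)^{\alpha-2}-\alpha(1+x)^{\alpha-1}\bigr]
=\alpha(1-\alpha)(1+x)^{\alpha-2}\;\geq\;0
\end{equation*}
for $0\leq\alpha\leq1$, not $\alpha(\alpha-1)x(1+x)^{\alpha-2}\leq 0$ as you claim; so $g$ is nondecreasing from $g(0)=0$ and $g(x)\geq 0$, the opposite of your intermediate conclusion. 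You then restore the correct direction by asserting a ``reversal coming from the placement of $g$ in the numerator,'' but no such reversal exists: the quotient rule gives $f'(x)=g(x)/x^{\alpha+1}$ with $x^{\alpha+1}>0$, so the sign of $f'$ equals the sign of $g$. As written, your argument proves the statement only because two sign errors cancel; either error alone would ``prove'' the false reversed inequality, so the proof is not valid in its current form.

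The repair is short, and both routes are cleaner than a second differentiation. First, your $g$ simplifies algebraically: since $x(1+x)^{\alpha-1}-(1+x)^\alpha=-(1+x)^{\alpha-1}$, one has $g(x)=\alpha\bigl[1-(1+x)^{\alpha-1}\bigr]\geq 0$ immediately for $0\leq\alpha\leq1$ and $x\geq0$, whence $f'\geq0$ and $f(x)\leq f(k)$ as desired. Alternatively, follow the substitution route used in this paper's proof of its Lemma for $\alpha<0$: $h(t)=(1+t)^\alpha-t^\alpha$ satisfies $h'(t)=\alpha\bigl[(1+t)^{\alpha-1}-t^{\alpha-1}\bigr]\leq 0$ for $0\leq\alpha\leq1$ (since $\alpha\geq0$ and $s\mapsto s^{\alpha-1}$ is nonincreasing), so for $0<x\leq k$ one gets $f(x)=h(1/x)\leq h(1/k)=f(k)$, which is the lemma.
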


We have the following theorem.

\begin{theorem}\label{Nsc3}
Suppose $k$ is a real number satisfying $0< k\leq1$.
For $0\leq\alpha\leq1$ and any multiqubit pure state $|\psi\rangle_{AB_0\ldots B_{N-1}}$ satisfying
\begin{equation}\label{inequality3}
k\mathcal{N}_{sc}^a(\rho_{A|B_j})\geq\mathcal{N}_{sc}^a(\rho_{A|B_{j+1}})\geq0
\end{equation}
with $j=0,1,\ldots,N-2$, we have
\begin{equation}\label{nc6}
[\mathcal{N}_{sc}^a(|\psi\rangle_{A|B_0\ldots B_{N-1}})]^\alpha
\leq\sum\nolimits_{j=0}^{N-1}\Big(\frac{(1+k)^\alpha-1}{k^\alpha}\Big)^{\omega_H(\vec{j})}[\mathcal{N}_{sc}^a(\rho_{A|B_j})]^\alpha.
\end{equation}
\end{theorem}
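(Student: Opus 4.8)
The plan is to dualize the proof of Theorem~\ref{Nsc1}, trading the monogamy inequality \eqref{ine1} for the polygamy inequality \eqref{ine2} and Lemma~1 for Lemma~2, whose bound \eqref{Negativity2} runs in the opposite direction. First I apply \eqref{ine2} to get $\mathcal{N}_{sc}^a(|\psi\rangle_{A|B_0\ldots B_{N-1}})\leq\sum_{j=0}^{N-1}\mathcal{N}_{sc}^a(\rho_{A|B_j})$. Since $0\leq\alpha\leq1$, the map $t\mapsto t^\alpha$ is nondecreasing on $[0,\infty)$, so raising both sides to the $\alpha$th power preserves the inequality, and it therefore suffices to establish
\[
\left[\sum_{j=0}^{N-1}\mathcal{N}_{sc}^a(\rho_{A|B_j})\right]^\alpha\leq\sum_{j=0}^{N-1}\Big(\frac{(1+k)^\alpha-1}{k^\alpha}\Big)^{\omega_H(\vec{j})}[\mathcal{N}_{sc}^a(\rho_{A|B_j})]^\alpha,
\]
which is exactly the analogue of \eqref{nc1} with every $\geq$ reversed to $\leq$.

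Next I prove this reduced inequality by induction on $n$ for $N=2^n$. For the base case $n=1$, I factor out $[\mathcal{N}_{sc}^a(\rho_{A|B_0})]^\alpha$ and apply \eqref{Negativity2} with $x=\mathcal{N}_{sc}^a(\rho_{A|B_1})/\mathcal{N}_{sc}^a(\rho_{A|B_0})$, which lies in $[0,k]$ by the ordering \eqref{inequality3}, to get
\[
[\mathcal{N}_{sc}^a(\rho_{A|B_0})+\mathcal{N}_{sc}^a(\rho_{A|B_1})]^\alpha\leq[\mathcal{N}_{sc}^a(\rho_{A|B_0})]^\alpha+\frac{(1+k)^\alpha-1}{k^\alpha}[\mathcal{N}_{sc}^a(\rho_{A|B_1})]^\alpha.
\]
For the step from $2^{n-1}$ to $2^n$, I split the sum into the lower block $j=0,\ldots,2^{n-1}-1$ and the upper block $j=2^{n-1},\ldots,2^n-1$. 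The ordering \eqref{inequality3} gives $\mathcal{N}_{sc}^a(\rho_{A|B_{j+2^{n-1}}})\leq k^{2^{n-1}}\mathcal{N}_{sc}^a(\rho_{A|B_j})$, so the ratio of the upper-block sum to the lower-block sum lies in $[0,k^{2^{n-1}}]\subseteq[0,k]$; applying \eqref{Negativity2} once more splits the $\alpha$th power of the total across the two blocks, after which the induction hypothesis applied to each block closes the step. The one combinatorial point is that relabeling the upper block sends $j=2^{n-1}+j'$ to $j'$, whose Hamming weight is $\omega_H(\vec{j})-1$, and the extra factor $\frac{(1+k)^\alpha-1}{k^\alpha}$ produced by \eqref{Negativity2} restores it to $\omega_H(\vec{j})$, matching the bookkeeping in Theorem~\ref{Nsc1}.

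Finally, for general $N$ I embed $|\psi\rangle$ into the $(2^n+1)$-qubit pure state $|\Gamma\rangle$ of \eqref{gamma1} with $N\leq2^n$, choosing the auxiliary factor $|\phi\rangle_{B_N\ldots B_{2^n-1}}$ to be a product of single-qubit pure states. Then each reduced matrix $\sigma_{AB_j}$ with $j\geq N$ has its $B_j$-part pure, which forces every pure state in any decomposition of $\sigma_{AB_j}$ to be product across $A|B_j$, so $\mathcal{N}_{sc}^a(\sigma_{A|B_j})=0$; meanwhile $\rho_A$ is unchanged by the embedding, giving $\mathcal{N}_{sc}^a(|\Gamma\rangle_{A|B_0\ldots B_{2^n-1}})=\mathcal{N}_{sc}^a(|\psi\rangle_{A|B_0\ldots B_{N-1}})$. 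Applying the power-of-two case to $|\Gamma\rangle$ and discarding the vanishing terms then yields \eqref{nc6}. I expect this vanishing of the padded assistance terms to be the genuinely delicate step: unlike the SCREN case of Theorem~\ref{Nsc1}, where the minimization automatically kills any term coming from a product reduced state, SCRENoA is a \emph{maximization} and a generic product reduced state (e.g. a maximally mixed one) need not have zero SCRENoA, so taking $|\phi\rangle$ pure on each auxiliary qubit is precisely what guarantees the reduction. One must also keep every inequality pointing upward throughout and verify that the lower-block partial sums are nonzero so the normalization in \eqref{Negativity2} is legitimate, which holds whenever $\mathcal{N}_{sc}^a(\rho_{A|B_0})>0$, the all-zero case being trivial via \eqref{ine2}.
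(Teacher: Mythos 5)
Your proposal is correct and follows essentially the same route as the paper's proof: reduce to the scalar inequality via \eqref{ine2}, induct over $N=2^n$ using Lemma 2 with the same upper-block/lower-block split and Hamming-weight bookkeeping, then pad to general $N$ with the product state \eqref{gamma1}. The one point where you go beyond the paper is the padding step, and your extra care there is warranted: the paper takes $|\phi\rangle_{B_N\ldots B_{2^n-1}}$ arbitrary and silently identifies $\sum_{j=0}^{2^n-1}\big(\tfrac{(1+k)^\alpha-1}{k^\alpha}\big)^{\omega_H(\vec{j})}[\mathcal{N}_{sc}^a(\sigma_{A|B_j})]^\alpha$ with the sum over $j<N$, which presupposes $\mathcal{N}_{sc}^a(\sigma_{A|B_j})=0$ for $j\geq N$; since SCRENoA is a maximization this can fail for a general product $\rho_A\otimes\sigma_{B_j}$ (two maximally mixed qubits admit a Bell-state decomposition, so their product has SCRENoA equal to $1$), and your choice of $|\phi\rangle$ as a product of single-qubit pure states, plus the support argument forcing every ensemble member of $\rho_A\otimes|\phi_j\rangle\langle\phi_j|$ to be a product vector, supplies exactly the justification the paper omits.
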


\begin{proof}
From inequality \eqref{ine2}, we only need to show that
\begin{equation}\label{nc3}
\Bigg(\sum\nolimits_{j=0}^{N-1}\mathcal{N}_{sc}^a(\rho_{A|B_j})\Bigg)^\alpha
\leq\sum\nolimits_{j=0}^{N-1}\Big(\frac{(1+k)^\alpha-1}{k^\alpha}\Big)^{\omega_H(\vec{j})}[\mathcal{N}_{sc}^a(\rho_{A|B_j})]^\alpha.
\end{equation}
First, we prove inequality \eqref{nc3} for $N=2^n$.
For $n=1$ and a three-qubit pure state $|\psi\rangle_{AB_0B_1}$ with two-qubit reduced density $\rho_{AB_0}$ and $\rho_{AB_1}$, one has
\begin{equation}
\begin{array}{rl}
&[\mathcal{N}_{sc}^a(\rho_{A|B_0})+\mathcal{N}_{sc}^a(\rho_{A|B_1})]^\alpha
=[\mathcal{N}_{sc}^a(\rho_{A|B_0})]^\alpha\Big(1+\frac{\mathcal{N}_{sc}^a(\rho_{A|B_1})}{\mathcal{N}_{sc}^a(\rho_{A|B_0})}\Big)^\alpha\\[2.0mm]
&\ \ \ \ \ \ \ \ \ \ \ \ \ \ \ \ \ \ \ \ \ \ \ \ \ \ \ \ \ \ \ \ \ \ \ \ \ \ \ \ \ \
\leq[\mathcal{N}_{sc}^a(\rho_{A|B_0})]^\alpha \Bigg[1+\displaystyle\frac{(1+k)^\alpha-1}{k^\alpha}\Bigg(\frac{\mathcal{N}_{sc}^a(\rho_{A|B_1})}{\mathcal{N}_{sc}^a(\rho_{A|B_0})}\Bigg)^\alpha\Bigg]\\[2.0mm]
&\ \ \ \ \ \ \ \ \ \ \ \ \ \ \ \ \ \ \ \ \ \ \ \ \ \ \ \ \ \ \ \ \ \ \ \ \ \ \ \ \ \
=[\mathcal{N}_{sc}^a(\rho_{A|B_0})]^\alpha+\displaystyle\frac{(1+k)^\alpha-1}{k^\alpha}[\mathcal{N}_{sc}^a(\rho_{A|B_1})]^\alpha,
\end{array}
\end{equation}
where the inequality is due to \eqref{Negativity2}.
Assume \eqref{nc3} is true for $N=2^{n-1}$ with $n\geq1$.
We consider the case of $N=2^n$.
From \eqref{inequality3},
we find $\mathcal{N}_{sc}^a(\rho_{A|B_{j+2^{n-1}}})\leq k^{2^{n-1}}\mathcal{N}_{sc}^a(\rho_{A|B_j})$
for $j=0,1,\ldots,2^{n-1}-1$.
Then
$$0\leq\frac{\sum\nolimits_{j=2^{n-1}}^{2^n-1}\mathcal{N}_{sc}^a(\rho_{A|B_j})}{\sum\nolimits_{j=0}^{2^{n-1}-1}
\mathcal{N}_{sc}^a(\rho_{A|B_j})}\leq k^{2^{n-1}}\leq k.
$$
Thus,
\begin{equation}
\begin{array}{rl}
&\Bigg(\sum\nolimits_{j=0}^{N-1}\mathcal{N}_{sc}^a(\rho_{A|B_j})\Bigg)^\alpha
=\Bigg(\sum\nolimits_{j=0}^{2^{n-1}-1}\mathcal{N}_{sc}^a(\rho_{A|B_j})\Bigg)^\alpha
\Bigg(1+\frac{\sum_{j=2^{n-1}}^{2^n-1}\mathcal{N}_{sc}^a(\rho_{A|B_j})}{\sum_{j=0}^{2^{n-1}-1}\mathcal{N}_{sc}^a(\rho_{A|B_j})}\Bigg)^\alpha\\[4mm]
&\ \ \ \ \ \ \ \ \ \ \ \ \ \ \ \ \ \ \ \ \ \ \ \ \ \ \ \ \leq\Bigg(\sum\nolimits_{j=0}^{2^{n-1}-1}\mathcal{N}_{sc}^a(\rho_{A|B_j})\Bigg)^\alpha
\Bigg[1+\displaystyle\frac{(1+k)^\alpha-1}{k^\alpha}\Bigg(\frac{\sum_{j=2^{n-1}}^{2^n-1}\mathcal{N}_{sc}^a(\rho_{A|B_j})}
{\sum_{j=0}^{2^{n-1}-1}\mathcal{N}_{sc}^a(\rho_{A|B_j})}\Bigg)^\alpha\Bigg]\\[4mm]
&\ \ \ \ \ \ \ \ \ \ \ \ \ \ \ \ \ \ \ \ \ \  \ \ \ \ \ \ =\Bigg(\sum\nolimits_{j=0}^{2^{n-1}-1}\mathcal{N}_{sc}^a(\rho_{A|B_j})\Bigg)^\alpha
+\displaystyle\frac{(1+k)^\alpha-1}{k^\alpha}\Bigg(\sum\nolimits_{j=0}^{2^{n-1}-1}\mathcal{N}_{sc}^a(\rho_{A|B_j})\Bigg)^\alpha.
\end{array}
\end{equation}
Since we have assumed that
$$
\Bigg(\sum\nolimits_{j=0}^{2^{n-1}-1}\mathcal{N}_{sc}^a(\rho_{A|B_j})\Bigg)^\alpha\leq
\sum\nolimits_{j=0}^{2^{n-1}-1}\Big(\frac{(1+k)^\alpha-1}{k^\alpha}\Big)^{\omega_H(\vec{j})-1}[\mathcal{N}_{sc}^a(\rho_{A|B_j})]^\alpha,
$$
we obtain
$$
\Bigg(\sum\nolimits_{j=2^{n-1}}^{2^n-1}\mathcal{N}_{sc}^a(\rho_{A|B_j})\Bigg)^\alpha\leq
\sum\nolimits_{j=2^{n-1}}^{2^n-1}\Big(\frac{(1+k)^\alpha-1}{k^\alpha}\Big)^{\omega_H(\vec{j})-1}[\mathcal{N}_{sc}^a(\rho_{A|B_j})]^\alpha,
$$
Thus,
\begin{equation}
\begin{array}{rl}
&\Bigg(\sum\nolimits_{j=0}^{N-1}\mathcal{N}_{sc}^a(\rho_{A|B_j})\Bigg)^\alpha
\leq\sum\nolimits_{j=0}^{2^{n-1}-1}\Big(\frac{(1+k)^\alpha-1}{k^\alpha}\Big)^{\omega_H(\vec{j})}[\mathcal{N}_{sc}^a(\rho_{A|B_j})]^\alpha
+\frac{(1+k)^\alpha-1}{k^\alpha}
\sum\nolimits_{j=2^{n-1}}^{2^n-1}\Big(\frac{(1+k)^\alpha-1}{k^\alpha}\Big)^{\omega_H(\vec{j})-1}[\mathcal{N}_{sc}^a(\rho_{A|B_j})]^\alpha\\[4mm]
&\ \ \ \ \ \ \ \ \ \ \ \ \ \ \ \ \ \ \ \ \ \ \ \ \ \ \ \ =\sum\nolimits_{j=0}^{2^n-1}\Big(\frac{(1+k)^\alpha-1}{k^\alpha}\Big)^{\omega_H(\vec{j})}[\mathcal{N}_{sc}^a(\rho_{A|B_j})]^\alpha.
\end{array}
\end{equation}

For an arbitrary nonnegative integer $N$ and an $(N+1)$-qubit pure state $|\psi\rangle_{AB_0B_1\ldots B_{N-1}}$,
let us consider the $(2^n+1)$-qubit $|\Gamma\rangle_{AB_0B_1\ldots B_{N-1}}$ defined in \eqref{gamma1}.
We have
\begin{equation}
\begin{array}{rl}
&\mathcal{N}_{sc}^a(|\psi\rangle_{A|B_0B_1\ldots B_{N-1}})=\mathcal{N}_{sc}^a(|\Gamma\rangle_{A|B_0B_1\ldots B_{2^n-1}})\\[2.0mm]
&\ \ \ \ \ \ \ \ \ \ \ \ \ \ \ \ \ \ \ \ \ \ \ \ \ \ \ \ \ \ \ \
\leq\sum\nolimits_{j=0}^{2^n-1}\Big(\frac{(1+k)^\alpha-1}{k^\alpha}\Big)^{\omega_H(\vec{j})}[\mathcal{N}_{sc}^a(\sigma_{A|B_j})]^\alpha\\[2.0mm]
&\ \ \ \ \ \ \ \ \ \ \ \ \ \ \ \ \ \ \ \ \ \ \ \ \ \ \ \ \ \ \ \
=\sum\nolimits_{j=0}^{N}\Big(\frac{(1+k)^\alpha-1}{k^\alpha}\Big)^{\omega_H(\vec{j})}[\mathcal{N}_{sc}^a(\rho_{A|B_j})]^\alpha.
\end{array}
\end{equation}
\end{proof}

It can be seen that \eqref{nc6} is tighter than \eqref{ine3} since $\frac{(1+k)^\alpha-1}{k^\alpha}\leq\alpha$ for $0\leq\alpha\leq1$.

Moreover, the polygamy inequality of Theorem \ref{Nsc3} can be further improved under some conditions.

\begin{theorem}\label{Nsc4}
Suppose $k$ is a real number satisfying $0< k\leq1$.
For $0\leq\alpha\leq 1$ and any multiqubit pure state $|\psi\rangle_{AB_0\ldots B_{N-1}}$, we have
\begin{equation}\label{nc5}
[\mathcal{N}_{sc}^a(|\psi\rangle_{A|B_0\ldots B_{N-1}})]^\alpha
\leq\sum\nolimits_{j=0}^{N-1}\Big(\frac{(1+k)^\alpha-1}{k^\alpha}\Big)^j[\mathcal{N}_{sc}^a(\rho_{A|B_j})]^\alpha,
\end{equation}
if
\begin{equation}
k\mathcal{N}_{sc}^a(\rho_{A|B_i})\geq\sum\nolimits_{j=i+1}^{N-1}\mathcal{N}_{sc}^a(\rho_{A|B_j}),
\end{equation}
for $i=0,1,\ldots,N-2$.
\end{theorem}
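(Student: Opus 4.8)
The plan is to mirror the proof of the SCREN result with exponent $j$ for $\alpha\geq1$ (the earlier theorem under the analogous chain condition), but now working with SCRENoA, reversing every inequality, and invoking Lemma 2 (inequality \eqref{Negativity2}) in place of Lemma 1. Exactly as in the proof of Theorem \ref{Nsc3}, the polygamy inequality \eqref{ine2} reduces the claim to the purely numerical statement
\begin{equation*}
\Bigg(\sum_{j=0}^{N-1}\mathcal{N}_{sc}^a(\rho_{A|B_j})\Bigg)^\alpha
\leq\sum_{j=0}^{N-1}\Big(\tfrac{(1+k)^\alpha-1}{k^\alpha}\Big)^{j}[\mathcal{N}_{sc}^a(\rho_{A|B_j})]^\alpha,
\end{equation*}
which is what I would actually establish. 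Abbreviating $N_j=\mathcal{N}_{sc}^a(\rho_{A|B_j})$ and $T=\frac{(1+k)^\alpha-1}{k^\alpha}$, I would argue by induction on $N$.

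For the inductive step I would peel off only the first subsystem, writing $\sum_{j=0}^{N-1}N_j=N_0\big(1+x\big)$ with $x=\big(\sum_{j=1}^{N-1}N_j\big)/N_0$. The $i=0$ case of the assumed chain condition, $kN_0\geq\sum_{j=1}^{N-1}N_j$, gives precisely $0\leq x\leq k$, so Lemma 2 applies and yields
\begin{equation*}
\Bigg(\sum_{j=0}^{N-1}N_j\Bigg)^\alpha
=N_0^\alpha(1+x)^\alpha
\leq N_0^\alpha+T\Bigg(\sum_{j=1}^{N-1}N_j\Bigg)^\alpha.
\end{equation*}
The remaining subsystems $B_1,\ldots,B_{N-1}$ still satisfy the hypothesis, since it is just the original list of conditions with the $i=0$ entry removed. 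Hence the induction hypothesis bounds $\big(\sum_{j=1}^{N-1}N_j\big)^\alpha\leq\sum_{j=1}^{N-1}T^{\,j-1}N_j^\alpha$, where the exponent is $j-1$ because $B_1$ now plays the role of the leading subsystem. Multiplying through by $T$ converts $T^{\,j-1}$ into $T^{\,j}$, and recombining with the $N_0^\alpha=T^0N_0^\alpha$ term closes the induction.

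The base case $N=1$ is trivial and $N=2$ is a single application of Lemma 2, so no subtlety arises there. I would also dispose of the degenerate case $N_0=0$ separately: the chain condition then forces every $N_j=0$, so both sides vanish and the inequality holds trivially; otherwise $N_0>0$ and the division above is legitimate. The main point to get right is the bookkeeping of the exponent shift $j\mapsto j-1$ under relabeling, together with the observation that the \emph{stronger} chain condition assumed here, rather than the neighbourwise condition \eqref{inequality3} of Theorem \ref{Nsc3}, is exactly what lets a single peeling step propagate the bound. Unlike Theorem \ref{Nsc3}, no embedding into a $2^n$-qubit state is required, because the exponents $j$ are linear and the recursion discards one subsystem at a time rather than halving.
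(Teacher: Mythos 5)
Your proposal is correct and follows essentially the same route as the paper: the paper's proof of this theorem simply defers to the argument in \cite{JSK7}, which is exactly the one-subsystem-at-a-time peeling you describe — reduce via \eqref{ine2}, apply the lemma (here Lemma 2, inequality \eqref{Negativity2}) to split off $B_0$ using the $i=0$ chain condition, and iterate, with the coefficient gaining one factor of $\frac{(1+k)^\alpha-1}{k^\alpha}$ at each step. Your added care with the degenerate case $N_0=0$ and the exponent shift under relabeling fills in details the paper leaves implicit, but introduces no new method.
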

\begin{proof}
The proof is similar to the one given in \cite{JSK7}.
\end{proof}

It should be noted that Theorems \ref{Nsc2} and \ref{Nsc4} provide the upper bound and the lower bound for
$\mathcal{N}_{sc}(|\psi\rangle_{A|B_o\ldots B_{N-1}})$,
since $\mathcal{N}_{sc}(|\psi\rangle_{A|B_o\ldots B_{N-1}})=\mathcal{N}_{sc}^a(|\psi\rangle_{A|B_o\ldots B_{N-1}})$.

The following lemma is useful for deriving monogamy relation in terms of $\alpha$-powered SCRENoA when $\alpha<0$.

\begin{lemma}
Suppose $k$ is a real number satisfying $0<k\leq 1$.
For $0\leq x\leq k$ and $\alpha<0$, we have
\begin{equation}\label{Negativity3}
(1+x)^\alpha\geq1+\frac{(1+k)^\alpha-1}{k^\alpha} x^\alpha.
\end{equation}
\end{lemma}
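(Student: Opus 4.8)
The plan is to reduce the claimed inequality to a single monotonicity statement and then settle that by a one-line derivative computation, exactly in the spirit of the companion Lemmas for \eqref{Negativity1} and \eqref{Negativity2} from \cite{Yang}, with the inequality sense now dictated by $\alpha<0$. First I would dispose of the boundary point $x=0$. Since $\alpha<0$ and $1+k>1$ give $(1+k)^\alpha<1$, the coefficient $\frac{(1+k)^\alpha-1}{k^\alpha}$ is negative; under the usual convention $0^\alpha=+\infty$ the right-hand side of \eqref{Negativity3} degenerates to $-\infty$, so the inequality holds trivially there and I may assume $0<x\le k$. For such $x$ one has $x^\alpha>0$, so dividing \eqref{Negativity3} through by $x^\alpha$ preserves the sense of the inequality and shows that \eqref{Negativity3} is equivalent to
\[
\frac{(1+x)^\alpha-1}{x^\alpha}\ \ge\ \frac{(1+k)^\alpha-1}{k^\alpha}.
\]

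Next I would introduce the auxiliary function $f(t)=\dfrac{(1+t)^\alpha-1}{t^\alpha}$ for $t>0$, so that the displayed inequality is simply $f(x)\ge f(k)$ for $0<x\le k$; this follows immediately once $f$ is shown to be decreasing on $(0,\infty)$. Computing $f'$ by the quotient rule and factoring $\alpha t^{\alpha-1}$ out of the numerator, the bracket telescopes through $(1+t)^{\alpha-1}t-(1+t)^\alpha=-(1+t)^{\alpha-1}$, yielding the clean form
\[
f'(t)=\frac{\alpha\,[\,1-(1+t)^{\alpha-1}\,]}{t^{\alpha+1}}.
\]

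The sign is then immediate: for $t>0$ the denominator $t^{\alpha+1}$ is positive, and because $\alpha-1<0$ while $1+t>1$ we have $(1+t)^{\alpha-1}<1$, so the bracket is positive; combined with $\alpha<0$ this forces $f'(t)<0$ on all of $(0,\infty)$. Hence $f$ is strictly decreasing, so $x\le k$ gives $f(x)\ge f(k)$, which is precisely the reduced inequality and therefore \eqref{Negativity3} (with equality exactly at $x=k$). The only genuinely computational step — and the one place to be careful — is the simplification of the numerator of $f'$ to $1-(1+t)^{\alpha-1}$; once that cancellation is carried out the monotonicity, and with it the whole lemma, is forced by the elementary bound $(1+t)^{\alpha-1}<1$. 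Note that the monotonicity in fact holds for every $k>0$, so the hypothesis $0<k\le1$ is used only to match the normalization of the subsequent monogamy application rather than being needed for the inequality itself.
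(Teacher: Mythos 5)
Your proof is correct and takes essentially the same route as the paper: the paper proves the identical monotonicity claim by substituting $t=1/x$ and showing $g(t)=(1+t)^\alpha-t^\alpha$ is increasing for $t\geq 1/k$, which is exactly your statement that $f(x)=\frac{(1+x)^\alpha-1}{x^\alpha}$ is decreasing, since $f(x)=g(1/x)$. Your explicit handling of the boundary point $x=0$ and your remark that $0<k\leq1$ is not actually needed are minor refinements the paper omits, but the core argument (reduction to a ratio comparison settled by a one-line derivative sign computation) is the same.
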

\begin{proof}
Let us consider the function $f(t,\alpha)=(1+t)^\alpha-t^\alpha$ with $t\geq\frac{1}{k}$ and $\alpha<0$.
Then $f_t(t,\alpha)=\alpha[(1+t)^{\alpha-1}-\alpha^{\alpha-1}]>0$,
i.e., $f(t,\alpha)$ is an increasing function with respect to $t$.
Thus,
\begin{equation}\label{Negativity4}
f(t,\alpha)\geq f\Big(\frac{1}{k},\alpha\Big)=\Big(1+\frac{1}{k}\Big)^\alpha-\frac{1}{k}=\frac{(1+k)^\alpha-1}{k^\alpha}.
\end{equation}
Set $x=\frac{1}{t}$ in \eqref{Negativity4}, we get \eqref{Negativity3}.
\end{proof}

\begin{theorem}\label{SCRENoA3}
Suppose $k$ is a real number satisfying $0<k\leq 1$.
For $\alpha<0$ and any multiqubit pure state $|\psi\rangle_{AB_0\ldots B_{N-1}}$, we have
\begin{equation}\label{SCRENoA2}
[\mathcal{N}_{sc}^a(|\psi\rangle_{A|B_0\ldots B_{N-1}})]^\alpha\geq\sum\nolimits_{j=0}^{N-1}\Big(\frac{(1+k)^\alpha-1}{k^\alpha}\Big)^j[\mathcal{N}_{sc}^a(\rho_{A|B_j})]^\alpha,
\end{equation}
if
\begin{equation}
k\mathcal{N}_{sc}^a(\rho_{A|B_i})\geq\mathcal{N}_{sc}^a(\rho_{A|B_{i+1}\ldots B_{N-1}})
\end{equation}
for $i=0,1,\ldots,N-2$.
\end{theorem}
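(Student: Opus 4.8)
The plan is to reproduce, for SCRENoA with $\alpha<0$, the same two–stage scheme used for the earlier SCRENoA results (Theorems \ref{Nsc3} and \ref{Nsc4}): first collapse the pure–state quantity to two–qubit reduced quantities through the polygamy inequality \eqref{ine2}, and then iterate the elementary estimate provided by the preceding Lemma, inequality \eqref{Negativity3}. Throughout I would write $R_i=\mathcal{N}_{sc}^a(\rho_{A|B_i\ldots B_{N-1}})$ for $i=0,\ldots,N-1$, so that $R_0=\mathcal{N}_{sc}^a(|\psi\rangle_{A|B_0\ldots B_{N-1}})$ and $R_{N-1}=\mathcal{N}_{sc}^a(\rho_{A|B_{N-1}})$. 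Note also that the ratios appearing below require the denominators $\mathcal{N}_{sc}^a(\rho_{A|B_i})$ to be nonzero, so implicitly $\mathcal{N}_{sc}^a(\rho_{AB_i})\neq0$ is needed.

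First I would establish a one–step recursion. Applying the polygamy inequality \eqref{ine2} to the split $A\,|\,B_i\,|\,B_{i+1}\ldots B_{N-1}$ gives $R_i\leq \mathcal{N}_{sc}^a(\rho_{A|B_i})+R_{i+1}$. Since $\alpha<0$, raising to the $\alpha$th power reverses the inequality, so
\[
R_i^\alpha\geq\big(\mathcal{N}_{sc}^a(\rho_{A|B_i})+R_{i+1}\big)^\alpha=\big[\mathcal{N}_{sc}^a(\rho_{A|B_i})\big]^\alpha\Big(1+\frac{R_{i+1}}{\mathcal{N}_{sc}^a(\rho_{A|B_i})}\Big)^\alpha .
\]
The hypothesis $k\mathcal{N}_{sc}^a(\rho_{A|B_i})\geq \mathcal{N}_{sc}^a(\rho_{A|B_{i+1}\ldots B_{N-1}})=R_{i+1}$ is precisely the condition $0\le R_{i+1}/\mathcal{N}_{sc}^a(\rho_{A|B_i})\le k$ required to invoke \eqref{Negativity3} with $x=R_{i+1}/\mathcal{N}_{sc}^a(\rho_{A|B_i})$. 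This produces the clean recursion
\[
R_i^\alpha\geq\big[\mathcal{N}_{sc}^a(\rho_{A|B_i})\big]^\alpha+\frac{(1+k)^\alpha-1}{k^\alpha}\,R_{i+1}^\alpha ,\qquad i=0,1,\ldots,N-2 .
\]
I would then close the argument by induction on the index, peeling off $B_0,B_1,\ldots$ one at a time with base case $R_{N-1}^\alpha=[\mathcal{N}_{sc}^a(\rho_{A|B_{N-1}})]^\alpha$; unfolding the recursion $N-1$ times attaches the geometric weights $\big(\frac{(1+k)^\alpha-1}{k^\alpha}\big)^{j}$ to $[\mathcal{N}_{sc}^a(\rho_{A|B_j})]^\alpha$, which is exactly \eqref{SCRENoA2}.

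The step I expect to be the genuine obstacle is controlling the sign of the multiplier $\kappa:=\frac{(1+k)^\alpha-1}{k^\alpha}$ as the recursion is iterated. For $\alpha<0$ one has $(1+k)^\alpha<1$ while $k^\alpha>0$, so $\kappa<0$; hence each time the recursion is substituted into itself the factor $\kappa$ reverses the direction of the accumulated inequality, and the weights $\kappa^{j}$ alternate in sign. Making the telescoping legitimate therefore cannot be a mechanical copy of the $\alpha\ge1$ or $0\le\alpha\le1$ arguments: the induction hypothesis must be framed so that the sign alternation of $\kappa^{j}$ is matched correctly at every stage, and one must keep the application of \eqref{Negativity3} strictly within the admissible range $R_{i+1}/\mathcal{N}_{sc}^a(\rho_{A|B_i})\le k$ secured by the hypotheses. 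This bookkeeping of signs, rather than any single analytic estimate, is the delicate part, and it is the place where I would concentrate the verification.
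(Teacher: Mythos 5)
Your proposal follows, step for step, the same route as the paper's own proof of Theorem \ref{SCRENoA3}: apply the polygamy inequality \eqref{ine2} to the cut between $B_i$ and the block $B_{i+1}\ldots B_{N-1}$, reverse the inequality by raising to the power $\alpha<0$, invoke the preceding Lemma (inequality \eqref{Negativity3}) with $x=\mathcal{N}_{sc}^a(\rho_{A|B_{i+1}\ldots B_{N-1}})/\mathcal{N}_{sc}^a(\rho_{A|B_i})\le k$, and then iterate the recursion $R_i^\alpha\ge[\mathcal{N}_{sc}^a(\rho_{A|B_i})]^\alpha+\kappa R_{i+1}^\alpha$, where $\kappa=\frac{(1+k)^\alpha-1}{k^\alpha}$. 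The obstacle you flag at the end --- that $\kappa<0$ when $\alpha<0$, so that substituting the recursion into itself multiplies a \emph{lower} bound by a negative number --- is not a bookkeeping subtlety that a cleverer induction hypothesis will absorb; it is a genuine and, in fact, fatal gap. To pass from $[\mathcal{N}_{sc}^a(\rho_{A|B_0})]^\alpha+\kappa R_1^\alpha$ to $[\mathcal{N}_{sc}^a(\rho_{A|B_0})]^\alpha+\kappa[\mathcal{N}_{sc}^a(\rho_{A|B_1})]^\alpha+\kappa^2R_2^\alpha$ one needs the \emph{upper} bound $R_1^\alpha\le[\mathcal{N}_{sc}^a(\rho_{A|B_1})]^\alpha+\kappa R_2^\alpha$, whereas the recursion supplies only the opposite bound. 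You should be aware that the paper's own proof commits exactly this error: its final chain of ``$\ge$'' signs is justified by the phrase ``the rest inequalities are due to \eqref{SCRENoA1}'', which is precisely the illegitimate substitution you were worried about. So what you located is a flaw in the published argument, not a difficulty of your own making.

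Moreover, no bookkeeping of signs can close the gap, because the statement itself is false for $N\ge3$; only the tripartite estimate \eqref{SCRENoA1}, where the single correction term carries the negative coefficient $\kappa$, survives. Concretely, take $\alpha=-1$, $k=1$ (so $\kappa=-\tfrac12$) and the four-qubit W-class state $|\psi\rangle_{AB_0B_1B_2}=a|1000\rangle+b|0100\rangle+c|0010\rangle+d|0001\rangle$ with $|a|^2=\tfrac12$, $|b|^2=\tfrac14$, $|c|^2=\tfrac15$, $|d|^2=\tfrac1{20}$. For one-excitation states, every pure state in the range of a reduced density matrix $\rho_{A|X}$ has Schmidt rank two and negativity proportional to its squared overlap with the unique entangled vector in that range, so all decompositions yield the same average negativity; hence SCREN and SCRENoA coincide and one computes $\mathcal{N}_{sc}^a(\rho_{A|B_0})=4|a|^2|b|^2=\tfrac12$, $\mathcal{N}_{sc}^a(\rho_{A|B_1})=4|a|^2|c|^2=\tfrac25$, $\mathcal{N}_{sc}^a(\rho_{A|B_2})=4|a|^2|d|^2=\tfrac1{10}$, $\mathcal{N}_{sc}^a(\rho_{A|B_1B_2})=4|a|^2(|c|^2+|d|^2)=\tfrac12$, and $\mathcal{N}_{sc}^a(|\psi\rangle_{A|B_0B_1B_2})=4|a|^2(1-|a|^2)=1$. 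The hypotheses of the theorem hold ($\tfrac12\ge\tfrac12$ for $i=0$ and $\tfrac25\ge\tfrac1{10}$ for $i=1$), yet the claimed conclusion \eqref{SCRENoA2} would read
\begin{equation*}
1=\Big[\mathcal{N}_{sc}^a(|\psi\rangle_{A|B_0B_1B_2})\Big]^{-1}\ \ge\ \Big(\tfrac12\Big)^{-1}-\tfrac12\Big(\tfrac25\Big)^{-1}+\tfrac14\Big(\tfrac1{10}\Big)^{-1}=2-\tfrac54+\tfrac52=\tfrac{13}{4},
\end{equation*}
which is absurd. The failure is structural: the even-power coefficients $\kappa^{2m}>0$ multiply terms $[\mathcal{N}_{sc}^a(\rho_{A|B_j})]^\alpha$ that diverge to $+\infty$ as the corresponding two-qubit SCRENoA tends to $0$, while the hypotheses and the left-hand side stay bounded; the same defect afflicts Theorems \ref{poly1} and \ref{SCRENoA5}, which rest on the same iteration. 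The correct conclusion of the verification you proposed is therefore not a repaired induction, but that the theorem as stated (and the paper's proof of it) is wrong.
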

\begin{proof}
From \eqref{ine2}, for arbitrary tripartite pure state $|\psi\rangle_{A|B_0B_1}$,
we get
\begin{equation}\label{SCRENoA1}
\begin{array}{rl}
&[\mathcal{N}_{sc}^a(|\psi\rangle_{A|B_0B_1})]^\alpha\geq[\mathcal{N}_{sc}^a(\rho_{A|B_0})+\mathcal{N}_{sc}^a(\rho_{A|B_1})]^\alpha\\[2.0mm]
&\ \ \ \ \ \ \ \ \ \ \ \ \ \ \ \ \ \ \ \ \ \ \ \ \ \ \ \
=[\mathcal{N}_{sc}^a(\rho_{A|B_0})]^\alpha\Big(1+\frac{\mathcal{N}_{sc}^a(\rho_{A|B_1})}{\mathcal{N}_{sc}^a(\rho_{A|B_0})}\Big)^\alpha\\[2.0mm]
&\ \ \ \ \ \ \ \ \ \ \ \ \ \ \ \ \ \ \ \ \ \ \ \ \ \ \ \
\geq[\mathcal{N}_{sc}^a(\rho_{A|B_0})]^\alpha+\frac{(1+k)^\alpha-1}{k^\alpha}[\mathcal{N}_{sc}^a(\rho_{A|B_1})]^\alpha.
\end{array}
\end{equation}
For arbitrary pure state $|\psi\rangle_{A|B_0\ldots B_{N-1}}$, we obtain
\begin{equation}
\begin{array}{rl}
&[\mathcal{N}_{sc}^a(|\psi\rangle_{A|B_0\ldots B_{N-1}})]^\alpha
\geq[\mathcal{N}_{sc}^a(\rho_{A|B_0})+\mathcal{N}_{sc}^a(\rho_{A|B_1\ldots B_{N-1}})]^\alpha\\[2.0mm]
&\ \ \ \ \ \ \ \ \ \ \ \ \ \ \ \ \ \ \ \ \ \ \ \ \ \ \ \ \ \ \ \
=[\mathcal{N}_{sc}^a(\rho_{A|B_0})]^\alpha\Big(1+\frac{\mathcal{N}_{sc}^a(\rho_{A|B_1\ldots B_{N-1}})}{\mathcal{N}_{sc}^a(\rho_{A|B_0})}\Big)^\alpha\\[2.0mm]
&\ \ \ \ \ \ \ \ \ \ \ \ \ \ \ \ \ \ \ \ \ \ \ \ \ \ \ \ \ \ \ \
\geq[\mathcal{N}_{sc}^a(\rho_{A|B_0})]^\alpha+\displaystyle\frac{(1+k)^\alpha-1}{k^\alpha}[\mathcal{N}_{sc}^a(\rho_{A|B_1\ldots B_{N-1}})]^\alpha\\[2.0mm]
&\ \ \ \ \ \ \ \ \ \ \ \ \ \ \ \ \ \ \ \ \ \ \ \ \ \ \ \ \ \ \ \ \geq\ldots\\[1.5mm]
&\ \ \ \ \ \ \ \ \ \ \ \ \ \ \ \ \ \ \ \ \ \ \ \ \ \ \ \ \ \ \ \
\geq[\mathcal{N}_{sc}^a(\rho_{A|B_0})]^\alpha+\displaystyle\frac{(1+k)^\alpha-1}{k^\alpha}[\mathcal{N}_{sc}^a(\rho_{A|B_1})]^\alpha+\ldots\\[2.0mm]
&\ \ \ \ \ \ \ \ \ \ \ \ \ \ \ \ \ \ \ \ \ \ \ \ \ \ \ \ \ \ \ \
+\Big(\frac{(1+k)^\alpha-1}{k^\alpha}\Big)^{N-1}[\mathcal{N}_{sc}^a(\rho_{A|B_{N-1}})]^\alpha,
\end{array}
\end{equation}
where the first inequality is due to $\alpha<0$, the second inequality is due to \eqref{Negativity3},
and the rest inequalities are due to \eqref{SCRENoA1}.
\end{proof}

Just like polygamy inequalities in Theorem 4 and Theorem 5, the following theorems give rise to the tighter monogamy relations in terms of
$\alpha$-powered SCRENoA for $\alpha<0$, with the notion of weighted constraint also involved.

\begin{theorem}\label{poly1}
Suppose $k$ is a real number satisfying $0< k\leq1$.
For $\alpha<0$ and any multiqubit pure state $|\psi\rangle_{AB_0\ldots B_{N-1}}$ satisfying
\begin{equation}\label{inequality4}
k\mathcal{N}_{sc}^a(\rho_{A|B_j})\geq\mathcal{N}_{sc}^a(\rho_{A|B_{j+1}})\geq0
\end{equation}
with $j=0,1,\ldots,N-2$, we have
\begin{equation}\label{nc7}
[\mathcal{N}_{sc}^a(|\psi\rangle_{A|B_0\ldots B_{N-1}})]^\alpha
\geq\sum\nolimits_{j=0}^{N-1}\Big(\frac{(1+k)^\alpha-1}{k^\alpha}\Big)^{\omega_H(\vec{j})}[\mathcal{N}_{sc}^a(\rho_{A|B_j})]^\alpha.
\end{equation}
\end{theorem}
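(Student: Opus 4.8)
The plan is to mirror the proof of Theorem \ref{Nsc3}, replacing the Bernoulli-type estimate \eqref{Negativity2} by its $\alpha<0$ counterpart \eqref{Negativity3}. First I would invoke the SCRENoA polygamy relation \eqref{ine2}, namely $\mathcal{N}_{sc}^a(|\psi\rangle_{A|B_0\ldots B_{N-1}})\leq\sum_{j=0}^{N-1}\mathcal{N}_{sc}^a(\rho_{A|B_j})$. Since $\alpha<0$, raising both sides to the power $\alpha$ reverses the inequality, so it suffices to establish the purely combinatorial estimate
$$\Big(\sum_{j=0}^{N-1}\mathcal{N}_{sc}^a(\rho_{A|B_j})\Big)^\alpha\geq\sum_{j=0}^{N-1}\Big(\tfrac{(1+k)^\alpha-1}{k^\alpha}\Big)^{\omega_H(\vec j)}[\mathcal{N}_{sc}^a(\rho_{A|B_j})]^\alpha.$$
As in Theorems \ref{Nsc1} and \ref{Nsc3}, I would first prove this for $N=2^n$ by induction on $n$, and then reduce a general $N$ to the dyadic case by padding $|\psi\rangle$ with an auxiliary pure state $|\phi\rangle_{B_N\ldots B_{2^n-1}}$ to form $|\Gamma\rangle$ as in \eqref{gamma1}, for which the extra reduced states $\sigma_{A|B_j}$ contribute nothing.

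For the base case $n=1$, writing $x=\mathcal{N}_{sc}^a(\rho_{A|B_1})/\mathcal{N}_{sc}^a(\rho_{A|B_0})\leq k$ by \eqref{inequality4} and applying \eqref{Negativity3} after factoring out the positive quantity $[\mathcal{N}_{sc}^a(\rho_{A|B_0})]^\alpha$ gives exactly $[\mathcal{N}_{sc}^a(\rho_{A|B_0})+\mathcal{N}_{sc}^a(\rho_{A|B_1})]^\alpha\geq[\mathcal{N}_{sc}^a(\rho_{A|B_0})]^\alpha+\tfrac{(1+k)^\alpha-1}{k^\alpha}[\mathcal{N}_{sc}^a(\rho_{A|B_1})]^\alpha$. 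For the inductive step I would split the index set into the lower half $0\leq j\leq 2^{n-1}-1$ and the upper half $2^{n-1}\leq j\leq 2^n-1$, iterate \eqref{inequality4} to obtain $\mathcal{N}_{sc}^a(\rho_{A|B_{j+2^{n-1}}})\leq k^{2^{n-1}}\mathcal{N}_{sc}^a(\rho_{A|B_j})$ and hence that the ratio of the two partial sums is at most $k^{2^{n-1}}\leq k$, apply \eqref{Negativity3} to that ratio, and finally feed in the induction hypothesis on each half, using that each upper-half index carries exactly one more $1$ in its binary expansion, so that $\omega_H$ increases by one.

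The main obstacle is the sign of the coefficient $c=\tfrac{(1+k)^\alpha-1}{k^\alpha}$. For $\alpha<0$ one has $(1+k)^\alpha<1$ and $k^\alpha\geq1$, so $c<0$, in contrast to the $0\leq\alpha\leq1$ regime of Theorem \ref{Nsc3} where $c\geq0$. Consequently, when the induction hypothesis for the upper half, of the form $\big(\sum_{j=2^{n-1}}^{2^n-1}\mathcal{N}_{sc}^a(\rho_{A|B_j})\big)^\alpha\geq\sum c^{\omega_H(\vec j)-1}[\mathcal{N}_{sc}^a(\rho_{A|B_j})]^\alpha$, is multiplied by the factor $c$ supplied by \eqref{Negativity3}, the direction of that inequality flips. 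I therefore expect the delicate point to be arranging the chain of estimates so that every inequality still points the correct way after this sign change, just as in the analogous $\alpha<0$ telescoping of Theorem \ref{SCRENoA3}. This is where I would concentrate the verification, checking that the combination of the factored estimate \eqref{Negativity3} with the two half-sum hypotheses remains compatible once $c<0$ is taken into account.
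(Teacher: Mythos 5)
Your reduction to the combinatorial estimate and your base case are both correct, but the obstruction you flag at the end is not a ``delicate point'' that careful bookkeeping can arrange --- it is fatal, and the proposal never gets past it. In the inductive step, after Lemma 3 gives $(T_1+T_2)^\alpha\geq T_1^\alpha+cT_2^\alpha$ for the two half-sums $T_1,T_2$ and $c=\frac{(1+k)^\alpha-1}{k^\alpha}<0$, you would need $cT_2^\alpha\geq c\sum_{j=2^{n-1}}^{2^n-1}c^{\omega_H(\vec{j})-1}[\mathcal{N}_{sc}^a(\rho_{A|B_j})]^\alpha$; dividing by $c<0$, this is $T_2^\alpha\leq\sum_{j}c^{\omega_H(\vec{j})-1}[\mathcal{N}_{sc}^a(\rho_{A|B_j})]^\alpha$, the exact reverse of the induction hypothesis, and the reverse is in general strict. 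Worse, this is not an artifact of the order of the steps: the inequality you propose to prove by induction is simply false for $\alpha<0$. Take $\alpha=-1$, $k=1$ (so $c=-\tfrac{1}{2}$), $N=4$, and $(a_0,a_1,a_2,a_3)=(1,1,1,\tfrac{9}{10})$, which satisfies $ka_j\geq a_{j+1}\geq 0$; then
\[
\Big(\sum_{j=0}^{3}a_j\Big)^{-1}=\frac{1}{3.9}\approx 0.256,
\qquad
\sum_{j=0}^{3}c^{\omega_H(\vec{j})}a_j^{-1}=1-\frac{1}{2}-\frac{1}{2}+\frac{1}{4}\cdot\frac{10}{9}\approx 0.278,
\]
so $\big(\sum_j a_j\big)^\alpha\geq\sum_j c^{\omega_H(\vec{j})}a_j^\alpha$ fails. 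Consequently the two-step scheme ``bound $[\mathcal{N}_{sc}^a(|\psi\rangle_{A|B_0\ldots B_{N-1}})]^\alpha$ below by $(\sum_j a_j)^\alpha$ via \eqref{ine2}, then compare the latter with the weighted sum'' cannot prove \eqref{nc7}: its second step is a false statement about nonnegative numbers satisfying \eqref{inequality4}.

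You should also know there is no proof in the paper to measure yourself against: Theorems \ref{poly1} and \ref{SCRENoA5} are asserted without proof, ``just like'' the $0\leq\alpha\leq1$ results, and the sign problem you uncovered equally infects the paper's own proof of Theorem \ref{SCRENoA3}, whose chain of ``rest inequalities'' multiplies \eqref{SCRENoA1} by powers of $c<0$ without reversing their direction. So what you identified is a genuine gap in the paper, not merely in your attempt. A correct proof of Theorem \ref{poly1} --- if the statement is true at all --- would have to use more information about $\mathcal{N}_{sc}^a(|\psi\rangle_{A|B_0\ldots B_{N-1}})$ than the aggregate upper bound \eqref{ine2}, since the counterexample above shows that bound, together with the ordering hypothesis, is too weak to force the claimed inequality.
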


\begin{theorem}\label{SCRENoA5}
Suppose $k$ is a real number satisfying $0< k\leq1$.
For $\alpha<0$ and any multiqubit pure state $|\psi\rangle_{AB_0\ldots B_{N-1}}$, we have
\begin{equation}
[\mathcal{N}_{sc}^a(|\psi\rangle_{A|B_0\ldots B_{N-1}})]^\alpha
\geq\sum\nolimits_{j=0}^{N-1}\Big(\frac{(1+k)^\alpha-1}{k^\alpha}\Big)^j[\mathcal{N}_{sc}^a(\rho_{A|B_j})]^\alpha,
\end{equation}
if
\begin{equation}
k\mathcal{N}_{sc}^a(\rho_{A|B_i})\geq\sum\nolimits_{j=i+1}^{N-1}\mathcal{N}_{sc}^a(\rho_{A|B_j}),
\end{equation}
for $i=0,1,\ldots,N-2$.
\end{theorem}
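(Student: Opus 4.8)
The plan is to reduce the statement to a purely scalar inequality by invoking the polygamy bound \eqref{ine2} once, and then to establish that scalar inequality by iterating the Lemma inequality \eqref{Negativity3}. For brevity write $x_j=\mathcal{N}_{sc}^a(\rho_{A|B_j})$ and $t=\frac{(1+k)^\alpha-1}{k^\alpha}$. First I would apply \eqref{ine2}, which gives $\mathcal{N}_{sc}^a(|\psi\rangle_{A|B_0\ldots B_{N-1}})\leq\sum_{j=0}^{N-1}x_j$; since $\alpha<0$, the map $y\mapsto y^\alpha$ is order-reversing, so $[\mathcal{N}_{sc}^a(|\psi\rangle_{A|B_0\ldots B_{N-1}})]^\alpha\geq\big(\sum_{j=0}^{N-1}x_j\big)^\alpha$. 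It therefore suffices to prove the scalar inequality
\[
\Big(\sum_{j=0}^{N-1}x_j\Big)^\alpha\geq\sum_{j=0}^{N-1}t^{\,j}\,x_j^\alpha.
\]

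To prove this I would peel off one term at a time. Factoring out $x_0$ gives $\big(\sum_{j=0}^{N-1}x_j\big)^\alpha=x_0^\alpha\big(1+\tfrac{\sum_{j=1}^{N-1}x_j}{x_0}\big)^\alpha$. The hypothesis $k\mathcal{N}_{sc}^a(\rho_{A|B_0})\geq\sum_{j=1}^{N-1}\mathcal{N}_{sc}^a(\rho_{A|B_j})$ forces the ratio $\sum_{j=1}^{N-1}x_j/x_0$ into $[0,k]$, so Lemma inequality \eqref{Negativity3} applies and yields $\big(\sum_{j=0}^{N-1}x_j\big)^\alpha\geq x_0^\alpha+t\big(\sum_{j=1}^{N-1}x_j\big)^\alpha$. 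Repeating this on $\sum_{j=1}^{N-1}x_j$, using the next hypothesis $k x_1\geq\sum_{j=2}^{N-1}x_j$, and continuing down to the last index, produces the telescoping bound $\big(\sum_{j=0}^{N-1}x_j\big)^\alpha\geq\sum_{j=0}^{N-1}t^{\,j}x_j^\alpha$, which is exactly what is required. This is the same iteration used in the proof of Theorem \ref{SCRENoA3}, the only difference being that here the intermediate reduced-state quantities $\mathcal{N}_{sc}^a(\rho_{A|B_{i+1}\ldots B_{N-1}})$ are replaced at the outset by the sums $\sum_{j=i+1}^{N-1}x_j$ supplied by the single application of polygamy, matching the summed form of the hypotheses.

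The step requiring the most care is keeping track of the inequality directions: because $\alpha<0$, both \eqref{ine2} and each use of \eqref{Negativity3} must be applied in their order-reversed form, and one must verify that the factored ratio genuinely remains in $[0,k]$ at every stage. This is guaranteed precisely by the chain of conditions $k\mathcal{N}_{sc}^a(\rho_{A|B_i})\geq\sum_{j=i+1}^{N-1}\mathcal{N}_{sc}^a(\rho_{A|B_j})$, one for each $i=0,1,\ldots,N-2$. One also needs each $x_j>0$ for the powers $x_j^\alpha$ to be finite, the same nondegeneracy assumption used in the other $\alpha<0$ results, which can be stated explicitly if desired.
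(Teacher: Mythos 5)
Your reduction via \eqref{ine2} and the order-reversal for $\alpha<0$ is fine, and so is the \emph{first} peel, $\big(\sum_{j=0}^{N-1}x_j\big)^\alpha\geq x_0^\alpha+t\big(\sum_{j=1}^{N-1}x_j\big)^\alpha$ with $t=\frac{(1+k)^\alpha-1}{k^\alpha}$. The gap is hidden in the word ``repeating.'' For $\alpha<0$ the coefficient $t$ is \emph{negative} (the numerator $(1+k)^\alpha-1$ is negative while $k^\alpha>0$), so to continue you must substitute the next-level bound $\big(\sum_{j=1}^{N-1}x_j\big)^\alpha\geq x_1^\alpha+t\big(\sum_{j=2}^{N-1}x_j\big)^\alpha$ into a term carrying the factor $t<0$; multiplication by $t$ reverses the inequality, so what you actually need there is an \emph{upper} bound of that form, which Lemma \eqref{Negativity3} does not provide. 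This is not a fixable bookkeeping slip: the scalar inequality you reduced everything to is false. Take $N=3$, $k=1$, $\alpha=-1$ (so $t=-\tfrac12$) and $(x_0,x_1,x_2)=(2,1,\tfrac12)$; the hypotheses $kx_0\geq x_1+x_2$ and $kx_1\geq x_2$ hold, yet
\begin{equation*}
\Big(\sum_{j=0}^{2}x_j\Big)^{\alpha}=\frac{2}{7}\;<\;\frac{1}{2}
= x_0^{\alpha}+t\,x_1^{\alpha}+t^{2}x_2^{\alpha}.
\end{equation*}
Hence no argument that starts by replacing $\mathcal{N}_{sc}^a(|\psi\rangle_{A|B_0\ldots B_{N-1}})$ with the upper bound $\sum_j x_j$ can close: the claimed lower bound can exceed $\big(\sum_j x_j\big)^\alpha$ itself on admissible data.

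Two further points for perspective. First, you are not misreading the paper's strategy: the paper states Theorem \ref{SCRENoA5} without proof, and the template you cite, the proof of Theorem \ref{SCRENoA3}, performs exactly this iteration (its ``rest inequalities are due to \eqref{SCRENoA1}'' step likewise multiplies \eqref{SCRENoA1} by $t<0$), so it contains the same sign obstruction; the peeling argument is only sound when $t\geq 0$, i.e.\ in the regimes $\alpha\geq1$ and $0\leq\alpha\leq1$ treated earlier. Second, the scalar counterexample lifts to an actual state, so the difficulty is intrinsic to the statement and not merely to the proof route: for the four-qubit $W$-class state $a|1000\rangle+b|0100\rangle+c|0010\rangle+d|0001\rangle$ one has $\mathcal{N}_{sc}^a(\rho_{A|B_j})=4|a|^2|b_j|^2$ and $\mathcal{N}_{sc}^a(|\psi\rangle_{A|B_0B_1B_2})=4|a|^2(1-|a|^2)$, so \eqref{ine2} is saturated; choosing $(|b|^2,|c|^2,|d|^2)\propto(4,2,1)$ reproduces the ratios above and violates the theorem's inequality at $\alpha=-1$, $k=1$. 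So rather than polishing this argument, the productive move is to flag that the $\alpha<0$ iteration needs a genuinely different mechanism (for instance, grouping the alternating-sign terms $t^{j}x_j^\alpha+t^{j+1}x_{j+1}^\alpha$ in pairs and bounding the pairs), or that the hypotheses of the theorem must be strengthened.
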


\section{Conclusion}
Entanglement monogamy is a fundamental property of multipartite entangled systems.
We have proposed tighter weighted monogamy inequalities related to the $\alpha$th power of SCREN for $\alpha\geq1$.
We also have investigated the polygamy relations in terms of $\alpha$-powered SCRENoA for the case of $0\leq\alpha\leq 1$.
Moreover, by using the $\alpha$th power of SCREN and SCRENoA for $\alpha<0$ respectively, the corresponding weighted polygamy and monogamy inequalities have also been established.
These new tighter monogamy and polygamy relations give rise to finer characterizations of the entanglement distributions
and capture better the intrinsic feature of multiparty quantum entanglement.

\section{Acknowledgements}
This work is supported by the National Natural Science Foundation of China under Nos. 11805143 and 11675113, and NSF of Beijing under No. KZ201810028042.

\end{document}